\newcommand{\blind}{1}
\newcommand{\bbP}{\mathbb{P}}
\newcommand{\bx}{\bm{x}}
\newcommand{\bb}{\bm{b}}
\newcommand{\bd}{\bm{d}}
\newcommand{\bp}{\bm{p}}
\newcommand{\by}{\bm{y}}
\newcommand{\bz}{\bm{z}}
\newcommand{\bX}{\mathbf{X}}
\newcommand{\bY}{\mathbf{Y}}
\newcommand{\bU}{\mathbf{U}}
\newcommand{\bV}{\mathbf{V}}
\newcommand{\bphi}{\boldsymbol{\phi}}
\newcommand{\bbeta}{\boldsymbol{\beta}}
\newcommand{\bdelta}{\boldsymbol{\delta}}
\newcommand{\btheta}{\boldsymbol{\theta}}
\newcommand{\btau}{\boldsymbol{\tau}}
\newcommand{\blambda}{\boldsymbol{\lambda}}
\newcommand{\balpha}{\boldsymbol{\alpha}}
\newcommand{\bzero}{\mathbf{0}}
\def\trans{^{\rm T}}
\def\wb{\widebar}
\def\wh{\widehat}
\def\wt{\widetilde}
\newtheorem{theorem}{Theorem}
\newtheorem{proposition}{Proposition}
\newtheorem{lemma}{Lemma}
\newtheorem{corollary}{Corollary}
\newtheorem{remark}{Remark}
\begin{document}

\def\spacingset#1{\renewcommand{\baselinestretch}%
{#1}\small\normalsize} \spacingset{1}

\if1\blind
{
  \title{\bf Information projection approach to  propensity score  estimation for handling selection bias under missing at random}
  \author{Hengfang Wang$^1$  and
     Jae Kwang Kim$^2$ \thanks{{jkim@iastate.edu}, Department of Statistics, Iowa State University, Ames, Iowa, U.S.A.} \\
    $^{1}$ School of Mathematics and Statistics, Fujian Normal University\\
    $^2$ Department of Statistics, Iowa State University}
\date{}
  \maketitle
} \fi

\if0\blind
{
  \bigskip
  \bigskip
  \bigskip
  \begin{center}
    {\LARGE\bf Information projection approach to  propensity score  estimation for handling selection bias under missing at random}
\end{center}
  \medskip
} \fi

\bigskip

\noindent%

\textbf{Summary.}
Propensity score weighting is widely used to improve the representativeness and correct the selection bias in the voluntary sample. The  propensity  score  is  often  developed using a model for the sampling probability, which can be subject to model misspecification.  In this paper, we consider an alternative approach of estimating the inverse of the propensity scores using the density ratio function satisfying the self-efficiency condition. The smoothed density ratio function is obtained by the solution to the information projection onto the space satisfying the moment conditions on the balancing scores. 
 By including the covariates for the outcome regression models only in  the density ratio model, we can achieve efficient propensity  score  estimation.  Penalized regression is  used to identify important covariates. 
We further extend the proposed approach to the multivariate missing case. Some limited simulation studies are presented to  compare with the existing methods.


\noindent%
{\it Keywords:}  Calibration estimation;
Density ratio model; Self-efficiency; 
Missing data
\vfill


\newpage
\spacingset{1}

\section{Introduction}
In statistical analysis using sample data, the concept of representativeness is crucial. The fundamental issue is that the final sample might be subject to selection bias and might not accurately reflect the target population. A naive analysis that ignores selection bias can produce incorrect results. Another way to frame the selection bias problem is as a missing data problem. 
Making valid statistical
inference with missing data is a fundamental problem in statistics \citep{little2019statistical, kim2013statistical}. 

The propensity score (PS) weighting approach is  widely used  for adjusting the selection bias in the final sample. Frequently, the propensity score is calculated using the  selection  probability model. In principle, regression models for binary responses, such as logistic regression, can be used to estimate the selection   probability in the presence of observed auxiliary data. The target parameter can then be estimated using an inverse probability weighting estimator to provide an unbiased estimate. However, correctly specifying the propensity score model can be tricky, and we frequently lack sufficient understanding of the selection mechanism.  Furthermore, the final estimation can be unstable when some propensity scores are close to zero \citep{ma2020}. The bias and variance of the resulting estimator are likely to be amplified by the nature of such an  `inverse' fashion.

{ The existing methods for propensity score estimation are either based on the maximum likelihood method
\citep{rosenbaum1983central, robins1994estimation}
 or calibration method \citep{folsom1991exponential,tan2010bounded, kot10,  graham2012inverse,hainmueller2012entropy, imai2014covariate,  chan2016globally, zhao2019}. The calibration method can also be justified under an outcome model, which  gives a doubly robust flavor. However,  the objective function for calibration estimation is not fully agreed. } 

\textcolor{black}{In this paper, we introduce the self-efficiency condition as a new paradigm for PS estimation. The self-efficiency condition means that the resulting PS estimator is equivalent to the prediction-based  estimator using the outcome response model. Due to the equivalence, the self-efficient propensity score estimator is more efficient than existing approaches. To build a self-efficient PS estimation, we first model density ratios with information projection \citep{csiszar2004}, which minimizes the Kullback-Leibler divergence measure for models with moment constraints on the balance scores. Density ratio models are widely employed to address classification and two-sample problems \citep{qin1998, nguyen2010, chen2013}.  To our best knowledge, however, it has not been studied in the context of correcting selection bias. Once the density ratio model has been constructed using the information projection approach, the model parameters can be estimated using the empirical likelihood technique  \citep{qin1994}. }

\textcolor{black}{
The self-efficient PS estimation is also closely related to weight smoothing. Information projection of the density ratio function combined with  calibration  estimation is  used to accomplish weight smoothing. We give some asymptotic theories for the smoothed PS estimator utilizing the  balance score function. The proposed estimator is doubly robust and locally efficient in the sense of \cite{robins1994estimation}.  By applying the weight smoothing based on the covariate in the outcome model, we can achieve the efficiency gain. 
 In addition, a two-step estimation approach is designed to facilitate parameter estimation computation.   
Variance estimation of the PS estimator can be implemented using either the linearization method or bootstrap. 
}

Furthermore, the proposed paradigm includes discussing propensity score estimation in high-dimensional covariate scenarios. When there are more auxiliary variables than we require, the PS estimator can be inefficient. Only significant covariates for the outcome model should be included in the log density ratio model for effective PS estimation. 
We consider  penalized regression to identify important covariates. Given the observed study variable and the corresponding auxiliary variables, we can implement penalized  regression methods  to select important covariates and obtain an  efficient PS estimator. 

Finally, using the proposed framework, the PS estimation can be easily extended to handle multivariate missing data.  
We can partition the sample into multiple groups based on missing patterns and apply the density ratio estimation method to obtain the inverse propensity scores.  By constructing a self-efficient PS estimation, we can
combine information from multiple sources with different missing patterns. 

The paper is organized as follows. In Section 2, the basic setup is introduced. In Section 3, the proposed method is developed using the information projection technique. In Section 4, 
its  asymptotic properties are investigated. In Section 5, the proposed method is extended to handle high dimensional covariate cases. In Section 6, an extension to multivariate missing data setup is developed. Two limited simulation studies are presented in Section 7 to compare with other existing methods and to understand the effect of choice of calibration variables. 
Concluding remarks are made in Section 8.


\section{Basic Setup}

Suppose that 
the parameter of interest $\btheta \in \mathbb{R}^{p}$ can be written as the unique solution to $\mathbb{E}\{ \bU( \btheta; \bX, Y) \}=\bzero$, where $Y$ is the study variable that is subject to missingness, $\bX$ is the auxiliary variable that is always observed, and $\bU$ is a smooth function of $\btheta$ with nonsingular gradient matrix $\mathbb{E}\{ \partial_\theta \bU( \btheta) \}$. Thus, the joint density of $(\bX, Y)$ is completely unspecified except for the moment condition  $\mathbb{E}\{ \bU( \btheta; \bX, Y) \}=\bzero$. 
We further assume that the second moments of $\bU( \btheta; \bX, Y)$ exist.

 Suppose we have  an $N$ independently and identically distributed realization of $(\bX, Y)$, denoted as $\{(\bx_{i}, y_{i}): i=1, \ldots, N \}$.   Since $Y$ is subject to missingness,  the actual dataset  is $\{(\bx_{i}, \delta_{i}y_{i}, \delta_{i}): i=1, \cdots, N\}$, where $\delta_{i}$ is the sampling indicator variable defined as 
\begin{align}
\delta_{i} = \begin{cases}
1 & \mbox{if\ } y_{i} \mbox{\ is observed,}\\
0 & \mbox{otherwise.}\\
\end{cases}
\notag
\end{align}

\textcolor{black}{
We  assume that the sampling  mechanism is 
 missing at random (MAR)   in the sense of \cite{rubin1976}. That is,  
\begin{align}
Y \perp \delta \mid \bX. 
\label{mar} 
\end{align} 
Under MAR, we are interested in finding 
 the propensity weights  $\omega(\bx_i)$ such that the solution to 
\begin{equation} 
 \sum_{i=1}^N \delta_i \omega(\bx_i) U( \btheta; \bx_i, y_i) = 0 
\label{ps}
\end{equation} 
leads to consistent estimation of $\btheta$. Equation (\ref{ps}) is called the propensity score equation. 
One popular choice for $\omega(\bx)$ is 
\begin{equation} 
\omega(\bx) =\{\mathbb{P} ( \delta =1 \mid \bx) \}^{-1}.
\label{direct}
\end{equation} 
There are two problems with the choice in (\ref{direct}). First, we do not know the true propensity score function $\mathbb{P} ( \delta =1 \mid \bx)$. 
Second, the resulting estimator is not necessarily efficient. To resolve these problems, we may consider the following conditions: 
\begin{equation} 
\sum_{i=1}^N \delta_i \omega(\bx_i) U( \btheta; \bx_i, y_i) =  \sum_{i=1}^N \left[ \delta_iU( \btheta; \bx_i, y_i) + (1- \delta_i) \hat{\mathbb{E}}\{ U( \btheta; \bx_i, Y) \mid \bx_i\}  \right],
\label{self} 
\end{equation} 
where $\hat{\mathbb{E}}\{ U( \btheta; \bx_i, Y) \mid \bx_i\}$ is the best predictor of $U( \btheta ; \bx_i, y_i)$ for $\delta_i=0$. 
Condition (\ref{self}) is attractive as the solution to the PS estimating equation in (\ref{ps}) is  equivalent to the expected estimating equation, which leverages the best predictors of the unobserved parts of the population estimating equation. We call condition (\ref{self}) \emph{self-efficiency} condition, as we can obtain an efficient estimator of $\btheta$ by solving the expected estimating equation \citep{wang2000}. The term ``self'' is employed to emphasize that efficiency is achieved without including additional terms to the propensity score equation.
}

\textcolor{black}{
To compute the conditional expectation in (\ref{self}), one could  make an extra assumption on the outcome model $[ y \mid \bx]$ and estimate the parameters in the model. To avoid the difficulty associated with the modeling for $[y \mid \bx]$, we  assume that 
\begin{equation} 
 \mathbb{E}\{ U( \btheta; \bx, Y) \mid \bx\}  \in \mbox{span} \{ 1,  b_1( \bx), \ldots, b_L( \bx) \} := \mathcal{H}
\label{basis}
\end{equation} 
for some $b_1( \bx), \ldots, b_L( \bx)$. Once condition  (\ref{basis}) is met, the additional modeling for $[y \mid \bx]$ can be omitted.  
Under  (\ref{basis}), by taking the conditional expectations  on both terms in (\ref{self}), we can check that the self-efficiency condition implies 
 \begin{equation} 
    \sum_{i=1}^N \delta_i \omega ( \bx_i) \bb (\bx_i) =  \sum_{i=1}^N  \bb (\bx_i) , 
   \label{balance} 
   \end{equation} 
   where $\bb( \bx)=(1, b_1(\bx), \cdots$ $, b_L(\bx))$ is a vector of  basis  functions in $\mathcal{H}$. Thus, the self-efficient propensity score method achieves the balancing property for $\bb(\bx)$. The balancing property is also called calibration property in survey sampling \citep{deville1992, fuller2009, 
   wu2020}.  
}

Another implication of the self-efficiency is related to weight smoothing.  
Note that, under assumption (\ref{basis}), the self-efficiency condition in (\ref{self}) implies that $\omega(\bx_i)$ is a function of $b_1(\bx), \cdots, b_L( \bx)$, which means that the final PS weights are functions of the covariates in the outcome regression model. By including the covariates in the outcome model only, we can achieve weight smoothing and efficient estimation.

\textcolor{black}{Thus, under assumption (\ref{basis}), it makes sense to impose the self-efficiency  for obtaining the propensity weights. 
How to uniquely determine $\omega(\bx)$ to satisfy the self-efficiency condition is our main research problem. We address this problem by introducing the density ratio function and information projection in the next section. 
}

\section{Proposed method }\label{EM_DRE}

\subsection{Density ratio function} 

To introduce the density ratio function, we use $f(\cdot)$ to denote generic density functions. 
Further, let $f_{j}(\bx)$ denote the conditional density of $\bX$ given $\delta = j$ for $j = 0, 1$. Using this notation, we define the following density ratio function: 
\begin{align*}
{r}(\bx) = \frac{f_{0}(\bx)}{ f_{1}(\bx)}.
\end{align*}
By Bayes theorem, we obtain 
$$ \frac{ \mathbb{P} ( \delta =0 \mid \bx) }{\mathbb{P} ( \delta =1 \mid \bx)  } = \frac{\mathbb{P}( \delta=0)}{ \mathbb{P}( \delta = 1) } \times r(\bx). 
$$
Assuming $c= \mathbb{P}( \delta=0)/ \mathbb{P}( \delta = 1)$ is known, we can express 
\begin{equation} 
 \frac{1}{ \mathbb{P} ( \delta =1 \mid \bX) } = 1 + c \cdot  r(\bX) 
\label{1}
\end{equation} 
and use 
\begin{equation*} 
 \omega_i = \frac{1}{ \mathbb{P} ( \delta_i =1 \mid \bx_i) } = 1 + c \cdot  r(\bx_i) 
 \end{equation*} 
as the propensity score weight for unit $i$ with $\delta_i=1$. So, our propensity score weighting problem reduces to density ratio function  estimation problem. Note that we can easily estimate $c$ by $\wh{c} = N_0/N_1$, $N_1= \sum_{i=1}^N \delta_i$ and $N_0 = N-N_1$. 




\textcolor{black}{
Let $\bb( \bx)=( b_1(\bx), \cdots$ $, b_L(\bx))$ be a vector of basis functions of $\mathcal{H}$. Condition  (\ref{basis}) implies that, as far as estimation of $\btheta$ is concerned, the MAR condition holds conditional on $\bb( \bX)$, that is, 
 \begin{equation} 
 Y \perp \delta \mid \bb( \bX) . 
 \label{mar2} 
 \end{equation} 
 According to \cite{rosenbaum1983central},   $\bb( \bx)$ in (\ref{mar2}) is called the balancing score function.
}

Using $\bb(\bx)$ satisfying (\ref{mar2}), we can construct \begin{equation} 
 r^\star (\bx) = \mathbb{E}\left\{ r(\bx) \mid \bb(\bx)  , \delta = 1 \right\} ,
 \label{bdr}
 \end{equation} 
 where $r(\bx)=f_0(\bx)/f_1(\bx)$ is the true density ratio function. 
  To distinguish ${r}^{\star}(\bx)$ from the original density ratio function $r(\bx)$, we may call ${r}^{\star}(\bx)$  the smoothed density ratio function in the sense that  it satisfies 
  \begin{equation} 
   \frac{1}{N_1} \sum_{i=1}^N \delta_i r^{\star} ( \bx_i) [1, \bb (\bx_i)] = \frac{1}{N_0} \sum_{i=1}^N \left( 1- \delta_i \right) [1, \bb (\bx_i)]  .
   \label{balance} 
   \end{equation} 
   Thus, the smoothed density ratio function $r^{\star} (\bx)$ is regarded as a low-dimensional projection of $r( \bx)$ onto the space satisfying (\ref{balance}).

 We can use ${r}^{\star} (\bx)$  in (\ref{bdr}) to construct a smoothed PS estimating function of $\btheta$ as follows:  
\begin{equation} 
 \wh{\bU}_{\rm SPS} ( \btheta ) = \frac{1}{N} \sum_{i=1}^N \delta_i \left\{ 1+ (N_0/N_1) r^{\star}( \bx_i) \right\} \bU \left( \btheta; \bx_i, y_i \right).
 \label{ps2} 
 \end{equation} 
 Furthermore, since $r(\bx)$ is a function of $\bx$ only,  we obtain 
 \begin{equation} 
  \mathbb{E}\left\{ r(\bx) \mid \bb(\bx) , y, \delta = 1 \right\} = \mathbb{E}\left\{ r(\bx) \mid \bb(\bx) , \delta = 1 \right\}, 
 \label{result2}
 \end{equation} 
 where $r(\bx)=f_0(\bx)/f_1(\bx)$ is the true density ratio function. 
 Note that result (\ref{result2}) implies that 
 $$ \mathbb{E}\{ \wh{\bU}_{\rm PS} (\btheta) \mid \bb(\bx), y, \bdelta \} = \wh{\bU}_{\rm SPS} (\btheta) , $$
where 
$$\wh{\bU}_{\rm PS} (\btheta)
= \frac{1}{N} \sum_{i=1}^N  \delta_i \left\{ 1+ (N_0/N_1) {r}( \bx_i) \right\} \bU \left( \btheta; \bx_i, y_i \right) 
. $$ 
Therefore, we can establish the following result. 
\begin{proposition}\label{P2}
Under (\ref{mar2}), we obtain 
\begin{equation} 
\mathbb{E} \left\{ \wh{\bU}_{\rm PS} ( \btheta ) \right\} = \mathbb{E}   \left\{ \wh{\bU}_{\rm SPS} ( \btheta ) \right\} 
\end{equation} 
and 
\begin{equation} 
\mathbb{V} \left\{ \wh{\bU}_{\rm PS} ( \btheta ) \right\}  \ge \mathbb{V}  \left\{ \wh{\bU}_{\rm SPS} ( \btheta ) \right\}   .
\label{result3} 
\end{equation}
\label{Pro1}
\end{proposition}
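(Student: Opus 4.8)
The plan is to read Proposition~\ref{Pro1} as a Rao--Blackwell statement: $\widehat{\bU}_{\rm SPS}(\btheta)$ is the conditional mean of $\widehat{\bU}_{\rm PS}(\btheta)$ given the reduced information $(\bb(\bx),y,\bdelta)$, so the two share a common mean while the conditional mean is the less dispersed of the two. The single analytic input is the identity $\mathbb{E}\{\widehat{\bU}_{\rm PS}(\btheta)\mid\bb(\bx),y,\bdelta\}=\widehat{\bU}_{\rm SPS}(\btheta)$ already derived from (\ref{result2}) in the discussion above; everything else is elementary conditioning. Throughout I keep in mind that $\mathbb{V}\{\cdot\}$ denotes a covariance matrix, so that the inequality (\ref{result3}) is to be read in the Loewner (positive semidefinite) order.

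For the equality of means I would take unconditional expectations on both sides of that identity and invoke the tower property, so that $\mathbb{E}\{\widehat{\bU}_{\rm SPS}(\btheta)\}=\mathbb{E}\{\mathbb{E}[\widehat{\bU}_{\rm PS}(\btheta)\mid\bb(\bx),y,\bdelta]\}=\mathbb{E}\{\widehat{\bU}_{\rm PS}(\btheta)\}$. For the variance comparison I would apply the law of total variance for random vectors,
$$\mathbb{V}\{\widehat{\bU}_{\rm PS}(\btheta)\}=\mathbb{E}\{\mathbb{V}[\widehat{\bU}_{\rm PS}(\btheta)\mid\bb(\bx),y,\bdelta]\}+\mathbb{V}\{\mathbb{E}[\widehat{\bU}_{\rm PS}(\btheta)\mid\bb(\bx),y,\bdelta]\},$$
and substitute the identity into the second summand, turning it into $\mathbb{V}\{\widehat{\bU}_{\rm SPS}(\btheta)\}$. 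Because each conditional covariance matrix is positive semidefinite and expectation preserves the Loewner order, the first summand satisfies $\mathbb{E}\{\mathbb{V}[\widehat{\bU}_{\rm PS}(\btheta)\mid\bb(\bx),y,\bdelta]\}\succeq\bzero$; dropping it gives exactly (\ref{result3}).

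The calculation is routine once the conditional-expectation identity is available, so the substantive content lives in that identity rather than in the proposition, and two points warrant care. First, the normalization $N_0/N_1$ is a function of $\bdelta$ and is hence held fixed once we condition on $\bdelta$, which is what allows the conditioning to act only on the density-ratio factor and to replace $r(\bx_i)$ by $r^\star(\bx_i)=\mathbb{E}\{r(\bx_i)\mid\bb(\bx_i),\delta_i=1\}$ in the manner prescribed by (\ref{result2}). Second, and this is the main obstacle, I must confirm that the smoothing map $r\mapsto r^\star$ really is the conditional expectation with respect to $(\bb(\bx),y,\bdelta)$, since only then does $\widehat{\bU}_{\rm SPS}(\btheta)$ coincide with $\mathbb{E}[\widehat{\bU}_{\rm PS}(\btheta)\mid\bb(\bx),y,\bdelta]$ and the second summand of the decomposition equal $\mathbb{V}\{\widehat{\bU}_{\rm SPS}(\btheta)\}$; granting (\ref{result2}), this holds, and the proposition follows.
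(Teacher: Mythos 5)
Your proposal is correct and follows essentially the same route as the paper: the paper likewise derives Proposition~\ref{Pro1} directly from the identity $\mathbb{E}\{\wh{\bU}_{\rm PS}(\btheta)\mid \bb(\bx), y, \bdelta\}=\wh{\bU}_{\rm SPS}(\btheta)$ implied by (\ref{result2}), applying iterated expectations for the equality of means and the total-variance (Rao--Blackwell) decomposition for the inequality. Your additional remarks on the Loewner ordering and on $N_0/N_1$ being measurable with respect to $\bdelta$ are correct and only make explicit what the paper leaves implicit.
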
 

Proposition \ref{P2} implies that the solution to $\wh{\bU}_{\rm SPS} (\btheta)=\bzero$ is more efficient than the solution to $\wh{\bU}_{\rm PS} (\btheta)=\bzero$. Such phenomenon has been recognized by \cite{little05},  \cite{beaumont2008}, \cite{kim2013}  and \cite{park2019b}. 
To find a model of the smoothed density ratio function $r^{\star}( \bx)$ in (\ref{bdr}), we introduce  information projection in the following subsection.

\subsection{Information projection} 


To motivate the proposed method, suppose that we have two probability distributions $\mathbb{P}_0, \mathbb{P}_1$, with $\mathbb{P}_0$ absolutely continuous with respect to $\mathbb{P}_1$. For simplicity, we also assume that $\mathbb{P}_k$ are absolutely continuous with respect to Lebesgue measure $\mu$, with density $f_k$ with support $\mathcal{X} \subset \mathbb{R}^p$, for $k=0,1$. The Kullback-Leibler  divergence between $\bbP_0$ and $\bbP_1$ is defined by 
$$D( \bbP_0\parallel \bbP_1) = \int \log \left(\frac{ d \bbP_0}{ d\bbP_1} \right)  d \mathbb{P}_0
= \int \log \left(\frac{ f_0}{f_1} \right) f_0 d \mu .  $$

Finding the density ratio function can be formulated as an optimization problem using the KL divergence. Specifically, we wish to minimize 
 \begin{align}
     Q (f_0) = \int  \log \left( f_0/ f_1 \right) f_{0}  d \mu  \label{qq} 
    \end{align} 
    with respect to  $f_0$ satisfying some moment conditions of the variables that are available from the whole sample.      Let $\bb(\bx)$ be the given balancing score obtained from assumption (\ref{basis}). Because $\bx_i$ are observed throughout the sample, we can estimate  $\mathbb{E}\{\bb(\bX)\}$ from the sample. 
    Given ${\mathbb{E}}\{\bb(\bX)\}$, to utilize the auxiliary information, we have 
\begin{equation}
    p\int  \bb(\bx) f_1 ( \bx) d \mu + (1-p) \int \bb(\bx) f_0 ( \bx) d \mu = \mathbb{E}\{\bb(\bX)\},
    \label{const1}
\end{equation}
where $p = \bbP (\delta = 1)$.   Thus, we wish to find the minimizer of $Q(f_0)$ in (\ref{qq}) subject to the constraint in (\ref{const1}). The following lemma gives the form of the solution to the optimization problem. 


\begin{lemma}
 The minimizer of $Q(f_0)$ in (\ref{qq}) subject to the constraint in (\ref{const1}) is  
\begin{equation} 
{f}_0^\star( \bx) = f_1( \bx) \times \frac{ \exp \{ \blambda_1\trans \bb(  \bx ) \} } { \mathbb{E}_1 \left[ \exp \{ \blambda_1\trans \bb(  \bx ) \} \right]} , 
\label{solution}
\end{equation} 
where $\blambda_1$ is chosen to satisfy  (\ref{const1}). 
\end{lemma}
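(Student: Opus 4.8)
The plan is to recognize (\ref{qq})--(\ref{const1}) as an information projection (maximum-entropy) problem, solve it by Lagrange multipliers to obtain the form of the candidate minimizer, and then certify global optimality by convexity. First I would observe that $Q(f_0)=D(\bbP_0\parallel\bbP_1)$ and that, once $\mathbb{E}\{\bb(\bX)\}$ and $f_1$ are regarded as fixed, the constraint (\ref{const1}) is an \emph{affine} moment constraint on $f_0$: solving (\ref{const1}) for the $f_0$-term gives $\int \bb(\bx) f_0(\bx)\, d\mu = \bm{m}_0$ for the fixed vector $\bm{m}_0 = (1-p)^{-1}[\mathbb{E}\{\bb(\bX)\} - p\int \bb f_1\, d\mu]$. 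Together with the normalization $\int f_0\, d\mu = 1$, these are the only active constraints.

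Next I would form the Lagrangian
\[
L(f_0) = \int f_0 \log\!\left(f_0/f_1\right) d\mu - \blambda_1\trans\!\Big(\int \bb f_0\, d\mu - \bm{m}_0\Big) - \gamma\Big(\int f_0\, d\mu - 1\Big),
\]
take its functional derivative with respect to $f_0$ pointwise, and set it to zero. This yields the stationarity condition $\log\{f_0(\bx)/f_1(\bx)\} + 1 - \blambda_1\trans\bb(\bx) - \gamma = 0$, hence $f_0(\bx) = f_1(\bx)\exp\{\blambda_1\trans\bb(\bx) + \gamma - 1\}$. Imposing $\int f_0\, d\mu = 1$ then fixes the multiplier $\gamma$ and forces the normalizing factor $\exp(\gamma-1) = \{\mathbb{E}_1[\exp(\blambda_1\trans\bb)]\}^{-1}$, which reproduces exactly (\ref{solution}), leaving $\blambda_1$ to be pinned down by the moment constraint (\ref{const1}).

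To upgrade this stationary point to a genuine global minimizer I would invoke convexity: $Q(f_0)$ is strictly convex in $f_0$ and the feasible set is affine, so any critical point of $L$ is the unique global minimizer. Equivalently, and more self-containedly, I would verify a Pythagorean identity for information projections. Writing $\log(f_0^\star/f_1) = \blambda_1\trans\bb - \log Z$ with $Z = \mathbb{E}_1[\exp(\blambda_1\trans\bb)]$, and using that both $f_0$ and $f_0^\star$ satisfy $\int \bb f\, d\mu = \bm{m}_0$, a short computation gives $D(f_0\parallel f_1) = D(f_0\parallel f_0^\star) + D(f_0^\star\parallel f_1)$. Since $D(f_0\parallel f_0^\star)\ge 0$ with equality iff $f_0 = f_0^\star$, optimality and uniqueness follow at once.

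The main obstacle is not the algebra but the rigor of the variational step over an infinite-dimensional density space: justifying the pointwise functional derivative and the interchange of differentiation and integration, and guaranteeing that a multiplier $\blambda_1$ solving (\ref{const1}) actually exists (equivalently, that $\bm{m}_0$ lies in the interior of the attainable moment set, so the tilted exponential family is well defined and $Z<\infty$). The Pythagorean route is attractive precisely because it sidesteps the first difficulty entirely---it never differentiates the objective and instead verifies the closed form (\ref{solution}) directly, reducing the whole minimality argument to nonnegativity of the Kullback--Leibler divergence. I would therefore present the Lagrangian calculation only as motivation for the form of $f_0^\star$ and rely on the Pythagorean identity for the actual proof.
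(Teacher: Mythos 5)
Your proposal is correct and follows essentially the same route as the paper, which obtains (\ref{solution}) by the Lagrange multiplier / information projection method of \citet{csiszar2004}: you correctly reduce (\ref{const1}) to a fixed affine moment constraint on $f_0$, derive the exponential tilting form from stationarity of the Lagrangian, and normalize to recover the stated expression. Your closing Pythagorean identity $D(f_0\parallel f_1)=D(f_0\parallel f_0^\star)+D(f_0^\star\parallel f_1)$ is precisely the standard I-projection argument the paper is implicitly invoking, and it supplies the global-optimality certification that the paper's brief sketch leaves implicit.
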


The solution (\ref{solution}) is obtained using the  information projection  method in information geometry. 
Roughly speaking, constraint in (\ref{const1}) is a moment condition on $f_0( \bx)$ to satisfy (\ref{bdr}). Thus, we can use Lagrange multiplier method to obtain the solution as an exponential tilting form. 
 Note that (\ref{solution}) is equivalent to assuming a model for the density ratio function. That is, our model is 
\begin{equation}
\log \{ r^\star( \bx; \blambda) \} = \lambda_0 +\blambda_1\trans \bb(  \bx ),
\label{loglinear} 
\end{equation}
where $\lambda_0$ is the normalizing constant satisfying $$\int r^\star(\bx; \blambda) f_1( \bx) d \mu = 1.$$ By (\ref{1}), 
 the log-linear density ratio model (\ref{loglinear}) is equivalent to the logistic regression model for the sample selection probability 
\begin{align}
    \mathbb{P} ( \delta = 1 \mid \bx) = \frac{ 1}{ 1+ (1/p-1) \cdot \exp \left\{  \lambda_0 +\blambda_1\trans \bb(  \bx )\right\} }. 
    \label{ps_model}
\end{align} 
Thus, our framework provides another justification of the logistic regression model for the sample selection  probability that  has been considered in \cite{folsom1991exponential}, \cite{kott2006using}, and \cite{tan2020regularized}, among others. \textcolor{black}{But, the parameter estimation discussed in \S 3.3  is different from the existing methods.  Also, the covariates in the parametric model in (\ref{ps_model}) is derived from the outcome model assumption in (\ref{basis}). Nonetheless, including other covariates in the  calibration weighting  can make the resulting PS estimator  robust  against model misspecification (\citealp{han2013estimation,chan2014oracle,chen2017multiply, yang2020doubly}).
}

Using (\ref{solution}), we can express (\ref{const1}) as 
\begin{align}
&p \int     \bb (  \bx) \left[ 1+ \frac{N_0}{N_1}  \cdot  \exp \left\{ \lambda_0 + \blambda_1\trans \bb(  \bx )  \right\} \right] f_1 ( \bx ) d \mu = \mathbb{E}\{ \bb ( \bX)\}  ,\label{const2}
\end{align}
where   $\lambda_0$ satisfies 
\begin{equation} 
 \int \exp \{ \lambda_0 + \blambda_1\trans \bb( \bx ) \}  f_1 ( \bx ) d \mu =1. 
 \label{normal}
 \end{equation} 
The function $ r^{\star}( \bx) \equiv \exp ( \lambda_0 + \blambda_1\trans \bb(  \bx ) ) $ can be viewed  as the projection of the true density ratio function onto the space satisfying  (\ref{const1}).

\subsection{Model parameter estimation} 

 We now wish to estimate the parameters in the log-linear density ratio model in (\ref{loglinear}) and estimate $\btheta$. Note that the parameter is defined through  $\mathbb{E} \{ \bU ( \btheta; \bX, Y) \}=\bzero$, which can be expressed as 
\begin{equation}
p \int  {\bU} ( \btheta; \bx, y)   \left[ 1+ \frac{N_0}{N_1}  \cdot  \exp \left\{ \lambda_0 + \blambda_1\trans \bb(  \bx )  \right\} \right] f_1 ( \bx, y ) d \mu  = 0  \label{const3}
\end{equation}
as an integral equation for $\btheta$, where $\lambda_0$ and $\blambda_1$ satisfy the constraints (\ref{const2}) and (\ref{normal}).

 To estimate the parameters, we use 
 $${\wh{\mathbb{P}}}_1 (x, y) = \frac{1}{  {N_1}  }  \sum_{i=1}^N  {\delta_i } \mathbb{I} \{ (\bx,y) = {(\bx_i, y_i)} \}$$
    to 
    find the minimizer of $D(  {\wh{\mathbb{P}}_1} \parallel   {\mathbb{P}_{1}}  )$ among $\mathbb{P}_{1}$  satisfying the integral constraints  (\ref{const2}), (\ref{normal}), and (\ref{const3}). 
    The problem can be formulated as an optimization problem using 
  the empirical likelihood (EL) method of \cite{qin1994}. In the EL method, a fully nonparametric density $p_i$ can be assumed to replace $f_1( \bx, y)$ in the integral equations. 
  That is, we maximize   $$ \ell( \bp ) = \sum_{i=1}^N  \delta_i \log (p_i) $$
        subject to 
        \begin{equation}
        \sum_{i=1}^N \delta_i p_i=1,
        \label{con1}
        \end{equation} 
      \begin{align}
\frac{N_1}{N} 
\sum_{i=1}^N \delta_i   \bb( \bx_i) \left\{ 1+ \frac{N_0}{N_1}  \cdot  \exp \{ {\lambda}_0 + {\blambda}_1\trans \bb( \bx_i) \}  \right\} p_i  =\frac{1}{N} \sum_{i=1}^N \bb(\bx_i)  , \label{con_3}
\end{align}
 \begin{equation}
\sum_{i=1}^N \delta_i {\bU} ( \btheta; \bx_i, y_i) \left\{ 1+ \frac{N_0}{N_1}  \cdot  \exp \{ {\lambda}_0 + {\blambda}_1\trans \bb( \bx_i) \}  \right\} {p}_i  = \bzero, 
\label{con_4} 
\end{equation}
and 
 \begin{equation} 
  \sum_{i=1}^N \delta_i   \exp \{ \lambda_0 + \blambda_1\trans \bb( \bx_i ) \} p_i=1. 
    \label{con4}
    \end{equation}

The solution to the above optimization problem lead to $\wh{p}_i=\wh{p}_i (  \blambda, \btheta)$  and we obtain the profile empirical likelihood  
\begin{equation} 
 \ell_e ( \btheta, \blambda) = \sum_{i=1}^N \delta_i \log \{ \wh{p}_i ( \blambda, \btheta) \} .
\label{pel}
\end{equation} 
The maximizer of $\ell_e( \btheta, \blambda)$ can be used as  the final estimator of $\btheta$. However, such a joint optimization of $\btheta$ and $\blambda$ is computationally expensive. 

To introduce an alternative computation, 
note that the form of the smoothed propensity weights 
 $$ \wh{\omega}_i^{\star} = \left\{ 1+ \frac{N_0}{N_1} \cdot \exp \left( \wh{\lambda}_0 + \wh{\blambda}_1\trans \bb(\bx_i)\right) \right\} \wh{p}_i $$
 depends on $\btheta$ only through $\wh{p}_i$.  Thus, we can first exclude (\ref{con_4}) in the EL optimization to remove the dependency on $\btheta$ in $\wh{\omega}_i^{\star}$ and then obtain $\wh{\btheta}$ by solving 
$\sum_{i=1}^N \delta_i \wh{\omega}_i^{\star} \bU( \btheta; \bx_i, y_i) = \bzero $ 
        for $\btheta$. This two-step procedure will greatly simplify the computation.

Note that maximizing $ \ell( \bp ) = \sum_{i=1}^N \delta_i \log (p_i) $ subject to (\ref{con1}) only gives $\wh{p}_i=1/N_1$. 
Imposing  (\ref{con_3}) and (\ref{con4}) to the unconstrained optimization does not play any role on the final solution because $\lambda_0$ and $\blambda_1$ are unknown.  Thus, the solution to the  optimization problem remains the same: 
    $$ \wh{p}_i=1/N_1 $$
    with  $\wh{\lambda}_0$ and $\wh{\blambda}_1$ satisfying 
    \begin{align}
        &\sum_{i=1}^N \delta_i
        \left\{ 1+ (N_0/N_1)   \cdot  \exp \{ \wh{\lambda}_0 + \wh{\blambda}_1\trans \bb(  \bx_i ) \}  \right\}  \left[  1, \bb (\bx_i) \right] = \sum_{i=1}^N \left[ 1, \bb (\bx_i) \right] \label{calibration}, 
    \end{align}
    which is a calibration equation for $[ 1, \bb(\bx)]$.

    Once the model parameters in (\ref{loglinear}) are estimated, we can  compute the smoothed propensity score  estimating function 
\begin{equation} 
 \wh{\bU}_{\rm SPS} (\btheta) = \frac{1}{N} \sum_{i =1}^N  \delta_i \wh{\omega}_i^{\star}  \bU( \btheta; \bx_i, y_i) , 
 \label{25}
 \end{equation} 
 where 
 \begin{equation} 
 \wh{\omega}_i^{\star}= 1+ (N_0/N_1)  \cdot  \exp \{ \wh{\lambda}_0 + \wh{\blambda}_1\trans \bb(  \bx_i ) \}  
 \label{swgt}
 \end{equation} 
 and $\wh{\lambda}_0$ and $\wh{\blambda}_1$ are computed from (\ref{calibration}). 
 The final estimator of $\btheta$ is obtained by solving $\wh{\bU}_{\rm SPS} (\btheta)=\mathbf{0}.$

\textcolor{black}{
The proposed propensity score estimating function in (\ref{25}) satisfies the self-efficiency. To see this, without loss of generality, assume that $\btheta$ is a scalar. Note that, for a fixed $\theta$,  we can write 
 \begin{equation} 
  \wh{U}_{\rm SPS} (\theta) 
  = \frac{1}{N} \sum_{i =1}^N\delta_i \wh{\omega}_i^{\star} U_i  =  \frac{1}{N} \sum_{i =1}^N  \left[ {m}_i(\bbeta)  + 
 \delta_i \wh{\omega}_i^{\star} \left\{  U_i - {m}_i( \bbeta) \right\}  \right] 
 \label{26}
 \end{equation} 
 where $U_i = U ( \theta; \bx_i, y_i)$ and 
 $ {m}_i( \bbeta)= \beta_0 + \sum_{j=1}^L \beta_j b_j( \bx_i) 
 $
 for any $\beta_0, \beta_1, \cdots, \beta_L$.  
 Now, since $\wh{\omega}_i^{\star}= 1+ (N_0/N_1)  \cdot  \exp \{ \wh{\lambda}_0 + \wh{\blambda}_1\trans \bb(  \bx_i ) \}  $, the smoothed PS estimator in (\ref{26}) is algebraically equivalent to  
 \begin{align} 
 &\wh{U}_{\rm SPS} (\theta) 
  =  \frac{1}{N} \sum_{i =1}^N  \left\{  \delta_i U_i + (1- \delta_i) m_i({\bbeta})   \right\} 
 + \frac{1}{N}  \sum_{i=1}^N \delta_i  \exp \{ \wh{\lambda}_0 + \wh{\blambda}_1\trans \bb(  \bx_i ) \}  \left\{ U_i -  m_i({\bbeta} )   \right\}. \label{27}
 \end{align} 
 The second term in the right hand side of (\ref{27}) is zero at $\bbeta = \hat{\bbeta}$ where $\hat{\bbeta}$ satisfies 
$$ \sum_{i=1}^N \delta_i  \exp \{ \wh{\lambda}_0 + \wh{\blambda}_1\trans \bb(  \bx_i ) \}  \left\{ U_i -  m_i(\hat{\bbeta} )   \right\} = 0 . 
 $$
Thus, we have 
\begin{equation}
\frac{1}{N} \sum_{i =1}^N\delta_i \wh{\omega}_i^{\star} U_i  = \frac{1}{N} \sum_{i =1}^N  \left\{  \delta_i U_i + (1- \delta_i) {m}_i(\hat{\bbeta})   \right\} .
    \label{28} 
\end{equation}
Therefore, the proposed estimator satisfies  the self-efficiency condition in (\ref{self}). } 
 

Equation (\ref{28}) means that 
 the final inverse propensity  weights $\wh{\omega}_i^{\star}$ do not directly use the regression  model for prediction, but it  implements regression prediction indirectly. Thus, the smoothed PS estimator incorporates the outcome regression model through calibration equation in (\ref{calibration})  and achieves the self-efficiency. 
 On the other hand, model calibration of \cite{wu2001} uses the outcome model directly in the calibration equation.  Similar ideas have been considered in the context of nonparametric calibration estimation in survey sampling. 
  For examples, \cite{montanari2005}  used a single-layer Neural Network model and  \cite{breidt2005model} used penalized Spline  model for nonparametric calibration estimation. 
 
\textcolor{black}{
\begin{remark}
Assumption (\ref{basis}) states that the basis functions do not depend on $\btheta$, which does not always hold. 
If the basis functions in (\ref{basis}) depends on $\btheta$, that is, 
     \begin{equation}
        \mathbb{E}\{ U( \btheta; \bx, Y) \mid \bx \} \in \mbox{span} \{1,  b_1( \bx; \btheta), \ldots, b_L (\bx; \btheta) \} , 
        \label{basis2} 
    \end{equation}
    we can use the following iterative procedure. 
    \begin{enumerate}
    \item Given the current parameter estimate $\hat{\btheta}^{(t)}$, find $\hat{b}_{ki}^{(t)} = b_k ( \bx_{i}; \hat{\btheta}^{(t)} )$, for $k=1, \cdots, L$, such that (\ref{basis2}) holds for $\btheta = \hat{\btheta}^{(t)}$. 
    \item Estimate the parameter in 
     \begin{equation} 
     \omega_i^{(t)} = 1+ (N_0/N_1) \exp \left\{ \phi_0 + \phi_1 \hat{b}_{1i}^{(t)} + \cdots + \phi_L \hat{b}_{Li}^{(t)}   \right\}
     \label{model3}
     \end{equation}
     by apply the calibration equation on $[ 1, b_{1i}^{(t)}, \cdots, b_{Li}^{(t)} ]$. 
     \item Use 
     $$ \hat{\omega}_i^{(t)} = 1+ (N_0/N_1) \exp \left\{ \hat{\phi}_0 + \hat{\phi}_1 \hat{b}_{1i}^{(t)} + \cdots + \hat{\phi}_L \hat{b}_{Li}^{(t)}   \right\}
     $$ 
     to find the solution $\hat{\btheta}^{(t+1)}$ to 
     $$ \sum_{i=1}^N \delta_i  \hat{\omega}_i^{(t)} U( \btheta; \bx_i, y_i) = 0 . $$
     \item Set $t=t+1$ and goto Step 1 until convergence. 
    \end{enumerate} 
    \item This is essentially an extension of the EM algorithm \citep{dempster77} applied to regression model. Step (a)-(b) corresponds to the E-step and Step (c) corresponds to  the M-step. In the E-step, instead of using a parametric model, we use the information projection to compute the conditional expectation  under assumption (\ref{basis}). 
\end{remark}
}

\section{Statistical properties}

We now discuss the asymptotic properties of the smoothed PS estimator $\hat{\btheta}_{\rm SPS}$ which is the solution to $\wh{\bU}_{\rm SPS} (\btheta)=\mathbf{0}$ using $\wh{\bU}_{\rm SPS} (\btheta)$ in (\ref{25}). 
To formally discuss the asymptotic properties of $\hat{\btheta}_{\rm SPS}$,        
 let $\wh{\bU}_2(\blambda)$ be the estimating function for ${\blambda}=({\lambda}_0, {\blambda}_1)$.  By  (\ref{calibration}), we can express 
\begin{equation} 
 \wh{\bU}_2(\blambda) = \frac{1}{N} \sum_{i=1}^N \delta_i  \omega^{\star} (\bx_i; \blambda) [1, {\bb}(\bx_i)]  - \frac{1}{N}  
\sum_{i=1}^N  [1, {\bb}(\bx_i) ], 
\label{est3}
\end{equation} 
where $\omega^{\star} (\bx; \blambda) = 1+ (N_0/N_1) \exp \{ \lambda_0 + \blambda_1\trans \bb(\bx) \}$. 
Let $  \blambda^{\star}$ be the true parameter value in the log-linear density ratio model in (\ref{loglinear}) so that 
$ \exp\{ \lambda_0^{\star}+  \bb\trans(\bx) \blambda_1^{\star}\}  = {r}^{\star}( \bx) $. By the constraint in  (\ref{const1}), we have $\mathbb{E} \{\wh{\bU}_2(\blambda^{\star}) \} =\bzero$, where the reference distribution is the conditional distribution of $\bx$ given $\delta$.   The unbiasedness of $\wh{\bU}_2 ( \blambda^{\star})$ can also be derived under the selection  probability model associated with (\ref{loglinear}).  That is, under the selection probability  model 
\begin{align}
    \mathbb{P} ( \delta = 1 \mid \bx) = \frac{ \exp \{ \lambda_0^{\star} + {\bb}\trans (\bx) \blambda_1^{\star} \}}{ 1+ \exp \{ \lambda_0^{\star} + {\bb}\trans (\bx) \blambda_1^{\star}  \} }:= {\pi}^{\star}(\bx), \label{RP}
\end{align} 
 we can also obtain $\mathbb{E} \{\wh{\bU}_2(\blambda^{\star}) \} =\bzero$. 

Thus, as long as the sufficient conditions for $\mathbb{E} \{\wh{\bU}_2(\blambda^{\star}) \} = \bzero$ are satisfied, 
we can establish the weak consistency of $\wh{\blambda}$ and apply the standard Taylor linearization to obtain the following theorem, where $\blambda^{\star}$ is 
the true value of parameter in model (\ref{loglinear}).  

The regularity conditions and the proof are presented in the Supplementary Material. 
\begin{theorem}\label{T1}
Assume that density ratio model (\ref{loglinear}) holds. Under the regularity conditions described in the Supplementary Material, we have
\begin{align}
 \wh{\bU}_{\rm SPS} (\btheta) = \frac{1}{N} \sum_{i=1}^N \left[  \bz_i\trans \bbeta^{\star} + \delta_i \omega^{\star} ( \bx_i; \blambda^{\star})  \{ \bU ( \btheta; \bx_i, y_i)-\bz_i\trans \bbeta^{\star} \}  \right] + o_p(N^{-1/2}), \label{33} 
\end{align}
where $\bz_i\trans=(1, \bb\trans(\bx_i)) $, $\omega_i^{\star} (\bx; \blambda) = 1+ (N_0/N_1) \exp \{ \lambda_0 + \blambda_1\trans \bb(\bx) \}$,  
and $\bbeta^{\star} = \bbeta^{\star}(\btheta)$ is the probability limit of the solution to 
\begin{equation} 
 \sum_{i=1}^{N} \delta_{i}  \exp( \bz_i\trans \blambda^{\star} )\left\{ \bU ( \btheta; \bx_i, y_i)  - \bz_i'  \bbeta\right\} \bz_i=\bzero. 
 \label{T12b}
 \end{equation} 
 
 \end{theorem}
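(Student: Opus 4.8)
The plan is to view $\wh{\bU}_{\rm SPS}(\btheta)=\wh{\bU}_{\rm SPS}(\btheta;\wh{\blambda})$ as a plug-in statistic whose only randomness beyond the explicit average enters through the calibration estimator $\wh{\blambda}$ solving $\wh{\bU}_2(\blambda)=\bzero$ in (\ref{est3}), and then to propagate the error $\wh{\blambda}-\blambda^{\star}$ by linearization. First I would record that $\mathbb{E}\{\wh{\bU}_2(\blambda^{\star})\}=\bzero$, which was verified in the excerpt both from the moment constraint (\ref{const1}) and from the selection probability model (\ref{RP}). Since $\wh{\bU}_2$ is smooth in $\blambda$ and its expected Jacobian is nonsingular, a standard M-estimation argument then yields weak consistency of $\wh{\blambda}$ together with the rate $\wh{\blambda}-\blambda^{\star}=O_p(N^{-1/2})$.

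The core is a Jacobian computation. Writing $\omega^{\star}(\bx;\blambda)=1+(N_0/N_1)\exp\{\bz\trans\blambda\}$ with $\bz\trans=(1,\bb\trans(\bx))$, we have $\partial\omega^{\star}/\partial\blambda=(\omega^{\star}-1)\bz$, so that at $\blambda^{\star}$
\begin{equation*}
\bB:=\frac{\partial\wh{\bU}_{\rm SPS}}{\partial\blambda}=\frac{1}{N}\sum_{i=1}^N\delta_i\{\omega^{\star}(\bx_i;\blambda^{\star})-1\}\bU(\btheta;\bx_i,y_i)\bz_i\trans,\quad
\bA:=\frac{\partial\wh{\bU}_2}{\partial\blambda}=\frac{1}{N}\sum_{i=1}^N\delta_i\{\omega^{\star}(\bx_i;\blambda^{\star})-1\}\bz_i\bz_i\trans.
\end{equation*}
I would Taylor expand both $\wh{\bU}_{\rm SPS}(\btheta;\wh{\blambda})$ and the identity $\wh{\bU}_2(\wh{\blambda})=\bzero$ about $\blambda^{\star}$, solve the second for $\wh{\blambda}-\blambda^{\star}=-\bA^{-1}\wh{\bU}_2(\blambda^{\star})+o_p(N^{-1/2})$, and substitute to get
\begin{equation*}
\wh{\bU}_{\rm SPS}(\btheta)=\wh{\bU}_{\rm SPS}(\btheta;\blambda^{\star})-\bB\bA^{-1}\wh{\bU}_2(\blambda^{\star})+o_p(N^{-1/2}).
\end{equation*}
The decisive observation is that the common factor $(N_0/N_1)\exp\{\bz_i\trans\blambda^{\star}\}=\omega^{\star}-1$ appears identically in $\bA$ and in $\bB\trans$, so that $\bB\bA^{-1}$ is exactly the transpose of the weighted least squares coefficient defined in (\ref{T12b}); that is, $\bB\bA^{-1}=(\bbeta^{\star})\trans$. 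Substituting the explicit forms of $\wh{\bU}_{\rm SPS}(\btheta;\blambda^{\star})$ and of $\wh{\bU}_2(\blambda^{\star})$, the $\delta_i\omega^{\star}$-weighted pieces recombine and the unweighted term $-N^{-1}\sum_i\bz_i$ inside $\wh{\bU}_2$ contributes the leading $N^{-1}\sum_i\bz_i\trans\bbeta^{\star}$, producing exactly the right-hand side of (\ref{33}).

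The main obstacle is making the remainder genuinely $o_p(N^{-1/2})$ rather than merely $o_p(1)$. This is where the regularity conditions enter: I would require that $\exp\{\bz\trans\blambda\}$ and its first two derivatives in $\blambda$ have finite moments uniformly over a neighborhood of $\blambda^{\star}$, that $\bU(\btheta;\bX,Y)$ has finite second moment, and that the probability limit of $\bA$ is nonsingular. These bound the second-order Taylor terms by $O_p(\Norm{\wh{\blambda}-\blambda^{\star}}^2)=O_p(N^{-1})$ and legitimize replacing the sample Jacobians $\bB,\bA$ by their probability limits, so that $\bbeta^{\star}$ may be taken as the plim in (\ref{T12b}). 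A secondary point is that the normalizing constant $\lambda_0$ is linked to $\blambda_1$ through (\ref{normal}); carrying $\blambda=(\lambda_0,\blambda_1)$ as a single parameter and the intercept as the first coordinate of $\bz$, as the statement does, keeps all the Jacobians self-contained and avoids handling the normalization as a separate constraint.
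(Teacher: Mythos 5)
Your proposal is correct, and the linearization it performs is the one the theorem requires; the difference from the paper lies only in how the correction term is organized. The paper's route (visible both in its subsequent derivation of the expansion (\ref{T12}) and in the Randles-type argument left in the source) is to first augment the estimating function with an identically-zero calibration term, writing $\wh{\bU}_{\rm SPS}(\btheta)=N^{-1}\sum_{i}\bigl[\bz_i\trans\bbeta+\delta_i\omega^{\star}(\bx_i;\wh{\blambda})\{\bU(\btheta;\bx_i,y_i)-\bz_i\trans\bbeta\}\bigr]$ for arbitrary $\bbeta$ --- an exact identity by (\ref{calibration}) --- and then to choose $\bbeta=\bbeta^{\star}$ via (\ref{T12b}) precisely so that the $\blambda$-derivative of the augmented function has zero expectation, whence $\wh{\blambda}$ may be replaced by $\blambda^{\star}$ at cost $o_p(N^{-1/2})$. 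You instead expand directly and compute the correction $-\bB\bA^{-1}\wh{\bU}_2(\blambda^{\star})$, then identify $\bB\bA^{-1}=(\bbeta^{\star})\trans$ because the tilting weights $(N_0/N_1)\exp(\bz_i\trans\blambda^{\star})$ are common to both Jacobians. The two routes are algebraically the same --- your key identity is exactly the statement that $\bbeta^{\star}$ annihilates the expected Jacobian in $\blambda$ --- but the paper's ordering makes the doubly robust decomposition (\ref{32}) fall out immediately and never requires inverting $\bA$ explicitly, while yours makes transparent where the leading term $N^{-1}\sum_i\bz_i\trans\bbeta^{\star}$ originates (the nonrandom half of $\wh{\bU}_2$). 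Your treatment of the remainder, of replacing sample Jacobians by their probability limits, and of the intercept/normalization coupling through the first coordinate of $\bz_i$ is consistent with how the paper sets up (\ref{calibration}); I see no gap.
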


By Theorem~\ref{T1}, we can obtain, ignoring the smaller order terms, 
\begin{align} 
\wh{\bU}_{\rm SPS} ( \btheta) - \wh{\bU}_{\rm N} (\btheta)  
= \frac{1}{N} \sum_{i=1}^N   \{  \delta_i  \omega^{\star} ( \bx_i; \blambda^{\star}) -1 \}  \{ \bU ( \btheta; \bx_i, y_i) -\bz_i\trans \bbeta^{\star} \}\label{32}
\end{align} 
where $\wh{\bU}_{\rm N} ( \btheta) =N^{-1} \sum_{i=1}^N \bU ( \btheta; \bx_i, y_i)$. Under the selection probability model (\ref{RP}), the first term of (\ref{32}) has zero expectation. Also, 
 if the regression  outcome model satisfies 
\begin{equation} 
 \mathbb{E} \{ \bU (\btheta; \bx, Y) \mid \bx \}  \in \mathcal{H} \equiv \mbox{span} \{ 1, b_1( \bx), \cdots, b_L(\bx) \} , 
\label{assume1} 
\end{equation}
then we obtain $\bz_i\trans \bbeta^{\star} =  \mathbb{E} \{ \bU (\btheta; \bx_i, Y) \mid \bx_i \}$ and the second term of (\ref{32}) has zero expectation. Thus, expression in (\ref{32}) gives the doubly robust property  \citep{bang2005doubly,tsiatis2007semiparametric,cao2009improving, kimhaziza12}  in that the resulting estimator is justified either the outcome regression model or the selection probability model is correctly specified.

\begin{corollary}
Suppose that the  assumptions for Theorem \ref{T1} hold.
 If $\bar{\bU} (\btheta; \bx) = \mathbb{E} \{ \bU (\btheta; \bx, Y) \mid \bx \}$ satisfies (\ref{assume1}), 
 we obtain $\bar{\bU} (\btheta; \bx) = \bz_i\trans \bbeta^\star$ and 
 \begin{equation}
     \sqrt{N}\left( \wh{\btheta}_{\rm SPS} - \btheta_0  \right) \stackrel{\mathcal{L}}{\longrightarrow} N(0,\btau^{-1}  V_1 (\btau^{-1 })\trans ),
     \label{result1}
 \end{equation}
 where $\btau = \mathbb{E}\{ \partial \bU( \btheta_0; \bX, Y)/ \partial \btheta\trans\}$, 
 \begin{equation} 
 V_1 = \mathbb{V} \left\{ \bar{\bU} (\btheta_0; X) \right\} + \mathbb{E} \left[  \delta \{ \omega^{\star}(\bX; \blambda^{\star} )\}^2  \mathbb{V} \{ U( \btheta_0; \bX, Y)  \mid \bX \} \right] ,
\label{var1}
\end{equation} 
and 
$ \omega^{\star}(\bX; \blambda ) = 1+ (N_0/N_1) \exp \{ \lambda_0 + \blambda\trans \bb( \bx) \} . $ 
\label{C2} 
\end{corollary}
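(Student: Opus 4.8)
The plan is to treat the right-hand side of the expansion in Theorem \ref{T1} as the effective influence-function estimating equation for $\btheta$ and then run the standard Z-estimator argument. First I would use hypothesis (\ref{assume1}): since $\bar{\bU}(\btheta;\bx) = \mathbb{E}\{\bU(\btheta;\bx,Y)\mid\bx\}$ lies in $\mathcal{H} = \mbox{span}\{1, b_1(\bx),\ldots,b_L(\bx)\}$, the vector $\bbeta^{\star} = \bbeta^{\star}(\btheta)$ of (\ref{33}) can be taken so that $\bz_i\trans\bbeta^{\star} = \bar{\bU}(\btheta;\bx_i)$ exactly, turning (\ref{33}) into
\begin{equation*}
\wh{\bU}_{\rm SPS}(\btheta) = \frac{1}{N}\sum_{i=1}^N\bphi_i(\btheta) + o_p(N^{-1/2}), \qquad \bphi_i(\btheta) = \bar{\bU}(\btheta;\bx_i) + \delta_i\,\omega^{\star}(\bx_i;\blambda^{\star})\{\bU(\btheta;\bx_i,y_i) - \bar{\bU}(\btheta;\bx_i)\}.
\end{equation*}
The $\bphi_i(\btheta)$ are i.i.d.\ (treating $N_0/N_1$ as its almost-sure limit $c$, the replacement absorbed by Slutsky's theorem), so the whole problem reduces to analysing this single influence function.

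Second I would verify the two moment identities that pin down the asymptotic mean and variance, both resting on MAR (\ref{mar}) together with the fact that $\omega^{\star}(\bx;\blambda^{\star})$ depends on $\bx$ only. Write $\bphi_i(\btheta_0) = A_i + B_i$ with $A_i = \bar{\bU}(\btheta_0;\bx_i)$ and $B_i = \delta_i\omega^{\star}(\bx_i;\blambda^{\star})\{\bU(\btheta_0;\bx_i,y_i) - \bar{\bU}(\btheta_0;\bx_i)\}$. Conditioning on $\bX$ and using $\delta\perp Y\mid\bX$ gives $\mathbb{E}(B_i\mid\bX) = \omega^{\star}(\bX;\blambda^{\star})\,\mathbb{E}(\delta\mid\bX)\,\mathbb{E}\{\bU(\btheta_0;\bX,Y) - \bar{\bU}(\btheta_0;\bX)\mid\bX\} = \bzero$, since the inner conditional mean of the centred $\bU$ vanishes by definition of $\bar{\bU}$. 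Hence $\mathbb{E}\{\bphi_i(\btheta_0)\} = \mathbb{E}(A_i) = \mathbb{E}\{\bU(\btheta_0;\bX,Y)\} = \bzero$. Because $A_i$ is $\bX$-measurable and $\mathbb{E}(B_i\mid\bX) = \bzero$, the cross term $\Cov(A_i,B_i) = \mathbb{E}\{A_i\,\mathbb{E}(B_i\trans\mid\bX)\}$ is zero, so $\mathbb{V}\{\bphi_i(\btheta_0)\} = \mathbb{V}(A_i) + \mathbb{V}(B_i)$. The first piece is $\mathbb{V}\{\bar{\bU}(\btheta_0;X)\}$; for the second, conditioning on $\bX$, using $\delta^2 = \delta$ and MAR, and the tower property turn $\mathbb{E}(B_iB_i\trans)$ into $\mathbb{E}[\delta\{\omega^{\star}(\bX;\blambda^{\star})\}^2\,\mathbb{V}\{\bU(\btheta_0;\bX,Y)\mid\bX\}]$. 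Together these reproduce exactly the two terms of $V_1$ in (\ref{var1}), and the classical central limit theorem gives $N^{-1/2}\sum_i\bphi_i(\btheta_0)\stackrel{\mathcal{L}}{\longrightarrow}N(\bzero, V_1)$.

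Third I would compute the Jacobian of the limiting estimating function $\Psi(\btheta) = \mathbb{E}\{\bphi_1(\btheta)\}$ at $\btheta_0$. Differentiating $\bphi_i$ and taking expectations, the first term contributes $\mathbb{E}\{\partial\bar{\bU}(\btheta_0;\bX)/\partial\btheta\trans\} = \btau$ after interchanging differentiation with the conditional expectation defining $\bar{\bU}$; the second term contributes zero because, by MAR and the identity $\mathbb{E}\{\partial\bU/\partial\btheta\trans\mid\bX\} = \partial\bar{\bU}/\partial\btheta\trans$, the centred factor vanishes again -- exactly the mechanism used for the mean. Thus $\partial\Psi(\btheta_0)/\partial\btheta\trans = \btau$. (Equivalently, the raw Jacobian $\mathbb{E}\{\delta\omega^{\star}\partial\bU/\partial\btheta\trans\}$ equals $\btau$ because the calibration/orthogonality property $\mathbb{E}[\{\omega^{\star}\mathbb{E}(\delta\mid\bX) - 1\}\bz] = \bzero$ behind (\ref{est3})--(\ref{calibration}) makes $\omega^{\star}\mathbb{E}(\delta\mid\bX) - 1$ orthogonal to $\mathcal{H}$, which under (\ref{assume1}) contains the columns of $\partial\bar{\bU}/\partial\btheta\trans$.) A one-term Taylor expansion of $\bzero = \wh{\bU}_{\rm SPS}(\wh{\btheta}_{\rm SPS})$ about $\btheta_0$ then yields $\sqrt{N}(\wh{\btheta}_{\rm SPS} - \btheta_0) = -\btau^{-1}N^{-1/2}\sum_i\bphi_i(\btheta_0) + o_p(1)$, and Slutsky's theorem delivers (\ref{result1}) with covariance $\btau^{-1}V_1(\btau^{-1})\trans$.

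The main obstacle is not any of these algebraic identities, which follow cleanly from MAR and the $\bX$-measurability of $\omega^{\star}$, but the regularity bookkeeping that makes the Z-estimator expansion rigorous: $\sqrt{N}$-consistency of $\wh{\btheta}_{\rm SPS}$, a neighbourhood version of (\ref{33}) whose remainder is $o_p(N^{-1/2})$ uniformly in $\btheta$ near $\btheta_0$ (so that $\wh{\btheta}_{\rm SPS}$ is a genuine near-zero of $N^{-1}\sum_i\bphi_i$), and stochastic equicontinuity of the map $\btheta\mapsto N^{-1}\sum_i\bphi_i(\btheta)$. These are precisely the points the regularity conditions in the Supplementary Material are meant to guarantee, so in the proof I would invoke them rather than re-establish them.
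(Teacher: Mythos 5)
Your proposal is correct and follows essentially the same route as the paper: it starts from the linearization in Theorem \ref{T1}, uses (\ref{assume1}) to identify $\bz_i\trans\bbeta^{\star}$ with $\bar{\bU}(\btheta;\bx_i)$ (justified because, under MAR, $\bar{\bU}$ solves the population version of (\ref{T12b})), and then reads off the mean-zero property, the variance decomposition $V_1 = \mathbb{V}\{\bar{\bU}\} + \mathbb{E}[\delta\{\omega^{\star}\}^2\mathbb{V}\{\bU\mid\bX\}]$ from the orthogonality of the two pieces of the influence function, and the Jacobian $\btau$, exactly as in the paper's expansion (\ref{T12}). The moment computations and the deferral of the uniform-remainder and consistency bookkeeping to the stated regularity conditions match the paper's treatment, so there is nothing to correct.
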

 

 \begin{remark}[Remark 2]
 The variance term in (\ref{var1}) deserves a further discussion. 
Let $\pi( \bx)$ be the true selection probability.  We may use the true  selection probability to construct a doubly robust estimator of the form  
 \begin{align}
\wh{\bU}_{\rm DR}(\btheta) 
=  \frac{1}{N} \sum_{i=1}^N \left[ \bar{\bU} (\btheta; \bx_i) + \frac{ \delta_i}{ \pi (\bx_i) }  \{ \bU( \btheta; \bx_i, y_i) -\bar{\bU} (\btheta; \bx_i) \} \right]. \label{double2}
\end{align} 
 Now, by (\ref{1}) and (\ref{bdr}), we have 
 $$ \omega^{\star}( \bx) = \mathbb{E}\left\{ \frac{1}{ \pi( \bx)} \mid \bb( \bx) , \delta =1  \right\} .
 $$
 By Jensen's inequality, we can obtain 
\begin{align} 
 &V \{ \wh{\bU}_{\rm DR}(\btheta)  \}  \label{var}\\
 =& \mathbb{V} \left\{ \bar{\bU} (\btheta; \bX) \right\} + \mathbb{E} \left[  \delta \{ \pi(\bX )\}^{-2}  \mathbb{V} \{ \bU( \btheta; \bX, Y)  \mid \bX \} \right] \notag \\
 \ge& \mathbb{V} \left\{ \bar{\bU} (\btheta; \bX) \right\} + \mathbb{E} \left[  \delta \{  \omega^{\star}( \bx)  \}^{2}  \mathbb{V} \{ \bU( \btheta; \bX, Y)  \mid \bX \} \right] \notag = V \{ \wh{\bU}_{\rm SPS}(\btheta)  \}.\notag
 \end{align} 
 Thus, the smoothed PS estimator is more efficient than the doubly robust estimator in (\ref{double2}). 
  Furthermore, if the selection  probability model satisfies  (\ref{RP}),  we have equality in (\ref{var}) and the smoothed  PS estimator is locally optimal in the sense that it achieves the variance lower bound 
  of \cite{robins1994estimation}. 
 \end{remark}

 \begin{remark}[Remark 3]
 
 Condition (\ref{assume1}) is a critical condition for the validity of the proposed estimator.  If the space $\mathcal{H} = \mbox{span} \{ \bb(\bx) \}$ is large enough, then (\ref{assume1}) is likely to be satisfied and result (\ref{result1}) will hold. However, if $\mathcal{H}$ is too large, then we can find $\mathcal{H}_0 \subset \mathcal{H}$ such that $\mathbb{E}( Y \mid \bx)   \in \mathcal{H}_0$. In this case, we can construct a smoothed density ratio function using the basis functions in $\mathcal{H}_0$ only. By Remark 2, it is more efficient than the smoothed PS estimator using the basis function in $\mathcal{H}$. Therefore, including unnecessary calibration variables in the calibration equation will increase the variance. This is consistent with the empirical findings of 
 \cite{brookhart2006variable} and  \cite{shortreed2017}. 
 We will discuss this result further in Section \ref{s4}. See also the second simulation study in Section 7.2. 
 \end{remark}

We now discuss variance estimation.  Using (\ref{33}), we obtain 
\begin{equation}
\wh{\btheta} - \btheta_0 
= - \btau^{-1} \frac{1}{N} \sum_{i=1}^N \bd( \bx_i, y_i, \delta_i; \btheta_0, \blambda^{\star}) + o_p( N^{-1/2}), 
\label{T12} 
\end{equation} 
where $\btau = \mathbb{E}\{ \partial \bU( \btheta_0; \bX, Y)/ \partial \btheta\trans\}$, 
\begin{align*} 
\bd( \bx_i, y_i, \delta_i; \btheta_0, \blambda^{\star}) 
=  \sum_{k=0}^L  b_k( \bx_i) \bbeta_k^{\star} +  \delta_i{\omega}^{\star} (\bx_i;  {\blambda}^{\star})  \{ \bU( \btheta_0; \bx_i, y_i) -  \sum_{k=0}^L  b_k( \bx_i) \bbeta_k^{\star}  \} \notag, 
\end{align*}
and  $\bbeta_k^{\star}$ are the solution to 
\begin{align*}  &\sum_{i=1}^N \delta_i \exp \{ {\lambda}_0^\star+ \blambda_1^{\star \trans} \bb(\bx_i) \}  
\times\left\{ \bU( \btheta_0; \bx_i, y_i)  -  \sum_{j=0}^L  b_j( \bx_i) \bbeta_j^{\star} \right\} b_k ( \bx_i)= 0 
\end{align*}
for all $k=0,1, \cdots, L$.

Now, the  variance estimation of $\wh{\btheta}_{\rm SPS}$ can be constructed from (\ref{T12}). Specifically, let 
\begin{align*}
  \wh{\bd}_{i}  & = \sum_{k=0}^L  b_k( \bx_i)  \wh{\bbeta}_k 
  + 
    \delta_{i} \left\{ 1+ 
    \frac{N_0}{N_1} \cdot r^{\star}( \bx_i; \wh{\blambda})  \right\} \left\{\bU( \wh{\btheta}; \bx_i, y_i) 
    - \sum_{k=0}^L  b_k( \bx_i) \hat{\bbeta}_k \right\},
\end{align*}
where  $\wh{{\bbeta}}_k$ are the solution to 
\begin{align*} 
\sum_{i=1}^N \delta_i \exp \{ \hat{\lambda}_0+ \wh{\blambda}_1^{ \trans} \bb(\bx_i) \} \left\{ \bU( \wh{\btheta}; \bx_i, y_i)  -  \sum_{j=0}^L  b_j( \bx_i) \wh{\bbeta}_j \right\} b_k ( \bx_i)  
= 0 
\end{align*}
for all $k=0,1, \cdots, L$.  
The linearization variance estimator is then  written as
\begin{align}
    \wh{\mbox{V}}(\wh{\btheta}_{SPS})  =\frac{1}{N}  \hat{\btau}^{-1} \wh{\Sigma}_{dd} \hat{\btau}^{-1 '}  
    ,\notag
\end{align}
where $\hat{\btau}= N^{-1} \sum_{i =1}^N \delta_i \wh{\omega}_i^{\star} \dot{\bU}( \hat{\btheta}; \bx_i, y_i)$, $\dot{U}( \btheta; \bx, y)= \partial \bU( \btheta; \bx, y)/ \partial \btheta\trans$,  
$\wh{\Sigma}_{dd}= (N-1)^{-1} \sum_{i=1}^{N}(\wh{\bd}_{i} - \wb{\wh{\bd}}_{N})^{\otimes 2} $ and 
 $\wb{\wh{\bd}}_{N} = N^{-1} \sum_{i=1}^{N}\wh{\bd}_{i}$.

The empirical likelihood approach also provides a way to develop a likelihood ratio  test   for parameters in a completely analogous way to that for parametric likelihoods. 
Let  $\ell_e(\btheta, \blambda)$ be the profile empirical log-likelihood in (\ref{pel}) under constraints \eqref{con1}, \eqref{con_3} and \eqref{con4}. Further, define the profile empirical  likelihood 
\begin{align}
    \ell_{p}(\btheta) = \max_{ \blambda } \ell_{e}(\btheta, \blambda). 
\end{align}
We have the following result. 
\begin{theorem}
Under regularity conditions in the Supplementary Material, 
\begin{align}
    2\{\ell_{p}(\wh{\btheta}) - \ell_{p}(\btheta_{0}) \}  \stackrel{\mathcal{L}}{\longrightarrow} \chi_{p}^{2},
\end{align}
as $N \rightarrow \infty$, where $\wh{\btheta}$ is obtained from the two-step procedure and $\btheta_{0}$ is the true parameter. 
\end{theorem}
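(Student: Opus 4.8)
The plan is to prove this as a Wilks-type empirical likelihood ratio theorem, adapting the profiling argument of \cite{qin1994} to the calibration-constrained density-ratio likelihood. First I would make the dependence of $\ell_e(\btheta,\blambda)$ on its arguments explicit by carrying out the inner maximization of $\sum_i\delta_i\log p_i$ under \eqref{con1} with Lagrange multipliers for the remaining constraints \eqref{con_3}, \eqref{con_4} and \eqref{con4} (the estimating-equation constraint \eqref{con_4} being what carries the dependence on $\btheta$). This gives the standard dual representation $\wh{p}_i = N_1^{-1}\{1+\bgamma\trans\bh_i(\btheta,\blambda)\}^{-1}$, where $\bh_i(\btheta,\blambda)$ stacks the centered constraint functions built from $\delta_i\omega^\star(\bx_i;\blambda)[1,\bb(\bx_i)\trans,\bU(\btheta;\bx_i,y_i)\trans]\trans$ and their calibration targets, and $\bgamma=\bgamma(\btheta,\blambda)$ solves $\sum_i\delta_i\bh_i/\{1+\bgamma\trans\bh_i\}=\bzero$. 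The target statistic is then recognized as $2\{\max_{\btheta,\blambda}\ell_e(\btheta,\blambda)-\max_{\blambda}\ell_e(\btheta_0,\blambda)\}$, i.e. a profile empirical likelihood ratio for the $p$-dimensional $\btheta$ with $\blambda$ as a nuisance parameter.

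Next I would establish the usual triangular-array expansion. Under the moment and smoothness conditions deferred to the Supplementary Material one shows $\max_i\|\bh_i\|=o_p(N^{1/2})$ and $\wh{\bgamma}(\btheta_0,\blambda^\star)=O_p(N^{-1/2})$, and the second-order expansion of $\log(1+x)$ yields, uniformly on a shrinking neighborhood of $(\btheta_0,\blambda^\star)$, $-2\{\ell_e(\btheta,\blambda)+N_1\log N_1\}=\bS_N(\btheta,\blambda)\trans\bW_N^{-1}\bS_N(\btheta,\blambda)+o_p(1)$, where $\bS_N(\btheta,\blambda)=N^{-1/2}\sum_i\delta_i\bh_i(\btheta,\blambda)$ is the standardized constraint discrepancy and $\bW_N\to\mathbb{V}(\bh)$ is the (assumed nonsingular) limiting weight matrix. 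Linearizing $\bS_N$ around $(\btheta_0,\blambda^\star)$ as $\bS_N(\btheta_0,\blambda^\star)+\bG_\theta\sqrt N(\btheta-\btheta_0)+\bG_\lambda\sqrt N(\blambda-\blambda^\star)$, with $\bG_\theta,\bG_\lambda$ the probability limits of the respective Jacobians, turns both maximizations of $\ell_e$ into generalized least squares problems, while the central limit theorem gives $\bS_N(\btheta_0,\blambda^\star)\stackrel{\mathcal{L}}{\to}N(\bzero,\mathbb{V}(\bh))$. The $\btheta$-block of the resulting linear representation must coincide with the influence function already recorded in \eqref{33} and \eqref{T12}.

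Then I would assemble the chi-square. The statistic equals the difference between minimizing the quadratic form $\bS_N\trans\bW_N^{-1}\bS_N$ jointly over $(\btheta,\blambda)$ and minimizing it over $\blambda$ alone at $\btheta=\btheta_0$. By the standard projection identity for quadratic forms, this difference equals the squared length, in the $\bW_N^{-1}$ metric, of the projection of $\bS_N(\btheta_0,\blambda^\star)$ onto the $p$-dimensional subspace spanned by the columns of $\bG_\theta$ taken orthogonal to those of $\bG_\lambda$. Since $\bW_N^{-1/2}\bS_N(\btheta_0,\blambda^\star)$ is asymptotically standard normal, this projected quadratic form converges to $\chi^2_p$; the degrees of freedom are exactly $\dim(\btheta)=p$, the number of coordinates freed between the two maximizations, with the limiting scale matching $\btau$ and $V_1$ of Corollary \ref{C2}.

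The main obstacle is the nuisance parameter $\blambda$, which enters the constraints nonlinearly through $\omega^\star(\bx;\blambda)=1+(N_0/N_1)\exp\{\lambda_0+\blambda_1\trans\bb(\bx)\}$ and must be profiled out without altering the degrees of freedom. Concretely, I must verify that the stacked Jacobian $[\bG_\theta,\bG_\lambda]$ has full column rank (an identifiability/non-collinearity requirement, closely tied to condition \eqref{assume1}) so that the projection in the previous step is genuinely $p$-dimensional and the cross-information between $\btheta$ and $\blambda$ cancels --- the empirical-likelihood analogue of Neyman orthogonality. A second delicate point is that the calibration targets in \eqref{con_3} are full-sample quantities rather than observed-subsample moments, so the expansion of $\bS_N$ carries two distinct sources of randomness; tracking these is precisely what produces the two terms of $V_1$ in \eqref{var1} and must be done carefully. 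Establishing the uniform control needed for the expansions (existence and uniqueness of the multiplier, $\max_i\|\bh_i\|=o_p(N^{1/2})$, stochastic equicontinuity of the score in $(\btheta,\blambda)$) under the stated moment and smoothness assumptions on $\bU$ and $\bb$, together with nonsingularity of $\btau$ and $\bW_N$, is the remaining technical work relegated to the Supplementary Material.
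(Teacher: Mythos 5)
Your overall route is the right one and is the one the paper itself points to: the result is proved ``similarly to \cite{qin1994},'' i.e., by the dual representation of the constrained empirical likelihood, a second-order expansion in the Lagrange multiplier, linearization of the stacked constraint functions in $(\btheta,\blambda)$, and the projection identity for the difference of two generalized-least-squares minima, giving a $\chi^2$ limit with $\dim(\btheta)=p$ degrees of freedom. Your identification of the delicate points --- full column rank of $[\bG_\theta,\bG_\lambda]$ so the projection is genuinely $p$-dimensional, and the extra source of randomness injected by the full-sample calibration targets in \eqref{con_3} --- is also on target.

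There is, however, one genuine gap. The theorem is stated for $2\{\ell_{p}(\wh{\btheta})-\ell_{p}(\btheta_{0})\}$ with $\wh{\btheta}$ the \emph{two-step} estimator (the solution of $\wh{\bU}_{\rm SPS}(\btheta)=\bzero$ computed with $\wh{p}_i=1/N_1$), whereas you silently replace $\ell_{p}(\wh{\btheta})$ by $\max_{\btheta}\ell_{p}(\btheta)$ when you ``recognize'' the statistic as $2\{\max_{\btheta,\blambda}\ell_{e}-\max_{\blambda}\ell_{e}(\btheta_0,\cdot)\}$. These coincide only if $\wh{\btheta}$ happens to be the profile-EL maximizer, which it is not by construction. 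Your projection argument therefore delivers $2\{\sup_{\btheta}\ell_{p}(\btheta)-\ell_{p}(\btheta_{0})\}\stackrel{\mathcal{L}}{\longrightarrow}\chi^{2}_{p}$, and the stated result still requires $2\{\sup_{\btheta}\ell_{p}(\btheta)-\ell_{p}(\wh{\btheta})\}=o_{p}(1)$. This bridge follows from two facts you essentially have in hand but never assemble: (i) the profile-EL maximizer $\wt{\btheta}$ and the two-step $\wh{\btheta}$ share the influence function in \eqref{33} and \eqref{T12}, so that $\sqrt{N}(\wh{\btheta}-\wt{\btheta})=o_{p}(1)$; and (ii) a uniform local quadratic expansion of $\ell_{p}$ about $\wt{\btheta}$ with nonsingular Hessian of order $N$ then gives
\begin{equation*}
2\{\ell_{p}(\wt{\btheta})-\ell_{p}(\wh{\btheta})\}=N(\wh{\btheta}-\wt{\btheta})\trans J(\wh{\btheta}-\wt{\btheta})+o_{p}(1)=o_{p}(1).
\end{equation*}
Your remark that the $\btheta$-block of the linear representation ``must coincide with'' \eqref{33} is exactly ingredient (i), but you invoke it only as a consistency check rather than as the step the statement actually needs. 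With that bridge supplied, the argument is complete and follows the same Qin--Lawless route the paper relies on.
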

For $\btheta=(\btheta_1, \btheta_2),$ the empirical likelihood ratio test for $H_0: \btheta_1= \btheta_1^0$ can also be developed similarly to \cite{qin1994}. 

\section{Dimension Reduction} \label{s4}

We now consider the case of $\theta=\mathbb{E}(Y)$. 
In Section 4, {we have seen that   either $\mathbb{E}(Y \mid \bx)$ lies in the linear space $\mathcal{H}=\mbox{span} \{ \bb(\bx) \}$ or  $\bb(\bx)$ is the basis for logistic regression \eqref{RP} can give the root-$n$-consistency of the proposed estimator.}  However, as pointed out in Remark 3, including other $\bx$-variables outside the outcome model into 
$\mathcal{H}$ may lead to efficiency loss.

To explain the idea further, we assume that {$\bb(\bx) = \bx_{\mathcal{M}}$, where $\mathcal{M}$ is an index set for a subset of $\bx$. }
The following lemma presents an interesting result.
\begin{lemma}
 If MAR condition in (\ref{mar}) holds  and the reduced model for $y$ holds such that 
    \begin{equation} 
    f( y \mid \bx) = f( y \mid \bx_{\mathcal{M}})
    \label{reduced}
    \end{equation} 
    for $\bx= (\bx_{\mathcal{M}}, \bx_{\mathcal{M}^c})$, then we can obtain the reduced MAR  given $\bx_{\mathcal{M}}$. That is,
    \begin{equation} 
     Y \perp \delta \mid \bX_{\mathcal{M}} . 
     \label{rmar}
     \end{equation} 
    \label{lemma4.1} 
\end{lemma}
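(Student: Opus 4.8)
The plan is to show directly that the conditional density $f(y \mid \bx_{\mathcal{M}}, \delta)$ does not depend on $\delta$, which is exactly the conditional independence claim $Y \perp \delta \mid \bX_{\mathcal{M}}$ in (\ref{rmar}). The whole argument is a short manipulation of conditional densities in which the two hypotheses are combined at a single key step.

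First I would marginalize over the complementary coordinates. Writing $\bx = (\bx_{\mathcal{M}}, \bx_{\mathcal{M}^c})$, I have
\begin{equation*}
f(y \mid \bx_{\mathcal{M}}, \delta) = \int f(y \mid \bx_{\mathcal{M}}, \bx_{\mathcal{M}^c}, \delta)\, f(\bx_{\mathcal{M}^c} \mid \bx_{\mathcal{M}}, \delta)\, d\mu(\bx_{\mathcal{M}^c}).
\end{equation*}
The key step is to simplify the first factor in the integrand. By the MAR assumption (\ref{mar}) we have $f(y \mid \bx, \delta) = f(y \mid \bx)$, and then by the reduced outcome model (\ref{reduced}) we have $f(y \mid \bx) = f(y \mid \bx_{\mathcal{M}})$. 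Hence $f(y \mid \bx_{\mathcal{M}}, \bx_{\mathcal{M}^c}, \delta) = f(y \mid \bx_{\mathcal{M}})$, which is free of both $\bx_{\mathcal{M}^c}$ and $\delta$. Pulling this factor out of the integral and using that $f(\bx_{\mathcal{M}^c} \mid \bx_{\mathcal{M}}, \delta)$ integrates to one leaves
\begin{equation*}
f(y \mid \bx_{\mathcal{M}}, \delta) = f(y \mid \bx_{\mathcal{M}}),
\end{equation*}
which is independent of $\delta$; this is precisely (\ref{rmar}).

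I expect the only real subtlety, rather than a genuine obstacle, to be bookkeeping about which hypothesis is used where: MAR alone or the reduced model alone is insufficient, and the collapse hinges on the fact that after applying \emph{both} the surviving factor $f(y \mid \bx_{\mathcal{M}})$ carries no dependence on the integration variable $\bx_{\mathcal{M}^c}$. A secondary point worth stating explicitly is the existence and positivity of the relevant conditional densities, so that the conditioning on $\delta$ and the disintegration over $\bx_{\mathcal{M}^c}$ are well defined; under the standing assumption that the joint law of $(\bX, Y)$ admits the density factorizations used above, this is immediate.
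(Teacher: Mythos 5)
Your proof is correct and follows the standard route one would expect for this lemma: condition on $(\bX_{\mathcal{M}},\delta)$, disintegrate over $\bX_{\mathcal{M}^c}$, use MAR to drop $\delta$ and the reduced model (\ref{reduced}) to drop $\bx_{\mathcal{M}^c}$ from the inner conditional density, and observe that the remaining mixing density integrates to one. This matches the argument the paper relegates to its supplement, so there is nothing to add beyond your own (appropriate) caveat that the conditional densities involved must be well defined.
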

Note that (\ref{rmar}) is a special case of the reduced MAR in (\ref{mar2}) using $\bb(\bx) = \bx_{\mathcal{M}}$ as the balancing score function. In the spirit of Remark 3, we can see that the smoothed PS estimator using $\bb(\bx)=\bx_{\mathcal{M}}$ is more efficient than the PS estimator using $\bb(\bx) = \bx$. 
Therefore, it is better to apply a model selection procedure to select the important variables
which satisfies \eqref{reduced}.

To find $\bx_{\mathcal{M}}$ satisfying \eqref{reduced}, we utilize two-stage estimation strategy to complete the smoothed  propensity score function:
\begin{enumerate}
  \item Step 1: Use a penalized regression method  to select the basis function for the regression of $y$ on $\bx$. 
  \item Step 2: Use the basis function in Step 1 to 
  to obtain the calibration equation in \eqref{calibration} and construct $\wh{r}( \bx) = r^{\star}( \bx; \wh{\blambda})= \exp \{ \hat{\lambda}_0+  {\bb}\trans(\bx_i) \wh{\blambda}_1 \}$. 
\end{enumerate}

In the first stage, we adapt the penalized estimating equations \citep{johnson2008penalized} to select important covariates in the outcome model.  Generally speaking, we utilize an $Z$-estimator in the  outcome model and we denote the corresponding score function as $\bU(\balpha)$. For example, $\bU(\balpha)$ can be written as
\begin{align}
    \bU(\balpha) = \frac{2}{N_{1}}\sum_{i=1}^{N}\delta_{i}\bx_{i}( \bx_{i}\trans\balpha - y_{i}),
\end{align}
where $\balpha \in \mathbb{R}^{d+1}$. The penalized estimating equations can be written as
\begin{align}\label{U P}
    \bU^{P}(\balpha) = \bU(\balpha) - q_{\lambda}(|\balpha|)\mbox{sgn}(\balpha),
\end{align}
where $q_{\lambda}(|\balpha|) = (q_{\lambda}(|\alpha_0|), \ldots,  (q_{\lambda}(|\alpha_{d}|) )\trans$, $q_{\lambda}(\cdot)$ is a continuous function and $q_{\lambda}(|\balpha|)\mbox{sgn}(\balpha)$ is an elementwise product between two vectors. Further, let $p_{\lambda}(x) = \int q_{\lambda}(x)dx$. In M-estimation framework, $p_{\lambda}(x)$ usually serves as a penalization function. Various penalization functions can be used but we only consider  the smoothly clipped absolute deviation function (SCAD) \citep{fan2001variable}. In particular, 
\begin{align}
    q_{\lambda}(\alpha) = \lambda\left\{ \mathbb{I}(|\alpha|<\lambda) +
\frac{ (a\lambda - |\alpha|)_{+}  }{(a-1)\lambda} \mathbb{I}(|\alpha| \geq \lambda)     \right\},
\end{align}
where $(x)_{+} = \max\{x, 0\}$, $\mathbb{I}$ is the indicator function and $a$ is constant specified as $3.7$ in \cite{fan2001variable}. Further, we let $\wh{\mathcal{M}}$ to denote the variable index set selected after SCAD procedure.  Here we use a working model to select the variables. 
According to   \cite{johnson2008penalized}, under some regularity conditions, we have
\begin{align}
 \mathbb{P}(\wh{\alpha}_{j} \neq 0) \rightarrow 1, \ \mbox{for}\ j \in \mathcal{M};\notag\\
    \mathbb{P}(\wh{\alpha}_{j} = 0) \rightarrow 1, \ \mbox{for}\ j \in \mathcal{M}^{c},\label{model selection consistency}
\end{align}
which constructs the model selection consistency. After the first stage, we now obtain the important variable set $\wh{\mathcal{M}}$.

In the second stage, we use the selected variables to compute the smoothed density ratio function. That is, use 
\begin{equation*}
\log \{ r^\star( \bx_{\mathcal{M}}) \} = \lambda_0 +\blambda_1\trans \bx_{\mathcal{M} } . 
\end{equation*}
 The resulting PS estimator is then computed by 
 $$\wh{\theta}_{\rm SPS} = N^{-1} \sum_{i =1}^N\delta_i \wh{\omega}^{\star} (\bx_i) y_i $$
 with $\wh{\omega}^{\star}(\bx) = 1+ (N_0/N_1)\exp ( \bx_{\mathcal{M}}\trans \wh{\blambda} )$. 

\begin{corollary}\label{C1}
Suppose that the assumptions for Theorem \ref{T1} hold. Also, the additional assumptions  listed in the Supplementary Material  hold.  
If $\bx_{\mathcal{M}}$ satisfies $\mathbb{E}( Y \mid \bx ) = \mathbb{E}( Y \mid \bx_\mathcal{M})$, then we obtain, {with probability goes to 1,} 
 \begin{equation}
     \sqrt{N}\left( \wh{\theta}_{\rm SPS} - \theta_0  \right) \stackrel{\mathcal{L}}{\longrightarrow} N(\bzero, V_r ),
     \label{result4}
 \end{equation}
 where 
 \begin{align}
V_r =& \mathbb{V}  \left\{\mathbb{E}(Y\mid \bX_{\mathcal{M}})\right\} + \mathbb{E}\left[
\delta \{ 1+ (N_0/N_1)  r^{\star}( \bX_\mathcal{M})  \}^2 \mathbb{V} (Y\mid \bX_{\mathcal{M}})   \right]\notag.\label{var2}
\end{align}
 \end{corollary}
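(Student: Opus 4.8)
The plan is to reduce Corollary~\ref{C1} to Corollary~\ref{C2} by exploiting the selection consistency of the first stage, so that the two-stage estimator behaves asymptotically like the \emph{oracle} smoothed PS estimator that uses the true index set $\mathcal{M}$ as balancing scores. Write $\wh{\theta}_{\rm SPS}$ for the two-stage estimator and let $\wh{\theta}_{\rm SPS}^{\mathcal{M}}$ denote the oracle estimator obtained by running Step~2 with the fixed basis $\bb(\bx)=\bx_{\mathcal{M}}$. Define the selection event $A_N = \{\wh{\mathcal{M}} = \mathcal{M}\}$. By the model selection consistency in~(\ref{model selection consistency}), $\mathbb{P}(A_N)\to 1$, which is the source of the qualifier ``with probability going to $1$.''

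First I would observe that the two stages decouple on $A_N$. Step~2 uses the first stage only through the selected variable set: it re-estimates $\blambda$ from the calibration equation~(\ref{calibration}) built on the selected basis, and the first-stage coefficient estimates $\wh{\balpha}$ never enter the weights $\wh{\omega}^{\star}(\bx)$. Hence on $A_N$ we have $\wh{\theta}_{\rm SPS} = \wh{\theta}_{\rm SPS}^{\mathcal{M}}$ exactly, and the randomness of the penalized fit is irrelevant to the limiting law once the correct set is recovered.

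Next I would verify that the hypotheses of Corollary~\ref{C2} hold for the oracle estimator with $\bb(\bx)=\bx_{\mathcal{M}}$. The assumption $\mathbb{E}(Y\mid\bX)=\mathbb{E}(Y\mid\bX_{\mathcal{M}})$ is exactly the reduced outcome model~(\ref{reduced}); together with MAR~(\ref{mar}), Lemma~\ref{lemma4.1} yields the reduced MAR~(\ref{rmar}), so $\bx_{\mathcal{M}}$ is a valid balancing score and condition~(\ref{mar2}) is met. Under the (assumed correct) linear working model for $\mathbb{E}(Y\mid\bx_{\mathcal{M}})$, condition~(\ref{assume1}) holds with $U(\theta;\bx,y)=y-\theta$, so the probability limit $\bz_i\trans\bbeta^{\star}$ of the solution to~(\ref{T12b}) equals $\mathbb{E}(Y\mid\bX_{\mathcal{M}})$. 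Applying Corollary~\ref{C2} then gives $\sqrt{N}(\wh{\theta}_{\rm SPS}^{\mathcal{M}}-\theta_0)\stackrel{\mathcal{L}}{\longrightarrow}N(0,V_r)$: with $\btau = \mathbb{E}\{\partial U/\partial\theta\} = -1$ the sandwich in~(\ref{result1}) collapses to $V_1$, and substituting $\bar{\bU}(\theta_0;\bX)=\mathbb{E}(Y\mid\bX_{\mathcal{M}})-\theta_0$, $\omega^{\star}(\bX;\blambda^{\star}) = 1+(N_0/N_1)r^{\star}(\bX_{\mathcal{M}})$, and $\mathbb{V}\{U\mid\bX\}=\mathbb{V}(Y\mid\bX_{\mathcal{M}})$ into~(\ref{var1}) reproduces $V_r$ in~(\ref{var2}).

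Finally I would transfer the convergence from the oracle to the two-stage estimator. For any continuity point $t$ of the limiting Gaussian CDF $F$,
\begin{align*}
\mathbb{P}\!\left(\sqrt{N}(\wh{\theta}_{\rm SPS}-\theta_0)\le t\right)
&= \mathbb{P}\!\left(\{\sqrt{N}(\wh{\theta}_{\rm SPS}^{\mathcal{M}}-\theta_0)\le t\}\cap A_N\right) + o(1)\\
&\longrightarrow F(t),
\end{align*}
since $\mathbb{P}(A_N^c)\to 0$ and the oracle has limiting CDF $F$. This establishes~(\ref{result4}). The main obstacle is precisely this post-selection step: one must argue that the data-dependent choice $\wh{\mathcal{M}}$ does not perturb the asymptotics, and the cleanest route is the event-conditioning argument above, which hinges on (i) selection consistency from~(\ref{model selection consistency}) and (ii) the decoupling of the two stages on $A_N$. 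A secondary technical point is checking that the second-stage calibration estimator $\wh{\blambda}$ remains $\sqrt{N}$-consistent and that the Taylor expansion of Theorem~\ref{T1} applies once $\mathcal{M}$ is fixed; these are handled by the regularity conditions invoked for Theorem~\ref{T1} restricted to the sub-basis $\bx_{\mathcal{M}}$.
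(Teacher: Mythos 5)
Your proposal is correct and follows essentially the same route as the paper: condition on the selection event $\{\wh{\mathcal{M}}=\mathcal{M}\}$, invoke the model selection consistency in (\ref{model selection consistency}) to give this event probability tending to one, and then apply the oracle result (Theorem \ref{T1} and Corollary \ref{C2} with $\bb(\bx)=\bx_{\mathcal{M}}$) to obtain the limiting normal law with variance $V_r$. This is exactly the argument the paper sketches in the remark following the corollary (citing Theorem 1 of \cite{yang2020doubly}), so no further comparison is needed.
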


By Corollary \ref{C1}, we can safely ignore the uncertainty due to the model selection in the first step. The asymptotic results in \eqref{result4} is based
 on the model selection consistency in \eqref{model selection consistency}. {That is, if we define $\mathcal{D}_n = \{ \mathcal{M} = \wh{\mathcal{M}} \}$ where $\wh{\mathcal{M}}$ is obtained from the first stage procedure, 
 the linearization in (\ref{result4}) is conditional on $\mathcal{D}_n$.  By (\ref{model selection consistency}), we obtain $P( \mathcal{D}_n ) \rightarrow 1$ and the limiting distribution of $T_n \equiv \sqrt{n}( \wh{\theta}_{\rm SPS} - \theta )$ given  $\wh{\mathcal{M}}$, denoted by $\mathcal{L} (T_n \mid \wh{\mathcal{M}} )$,  is asymptotically equivalent to  $\mathcal{L} (T_n \mid \mathcal{M} )$.  
 See also Theorem 1 of \cite{yang2020doubly} for a similar argument. 
 }

\section{Multivariate missing data}\label{ss62}

We now consider the case of multivariate study variables, denoted by $Y_1, \cdots, Y_p$, and they are subject to missingness. There are $2^p$ possible missing patterns with $p$ study variables. 
 Let $T \le 2^p$ be the realized number of different missing patterns in the sample. Thus, the sample is partitioned into $T$ disjoint subsets with the same missing patterns. The parameter of interest is defined through $\mathbb{E} \{ \bU ( \btheta ; \mathbf{Y} ) \}=0$, where $\bY = (Y_1, \ldots, Y_p)\trans$.  

 Let $S_t$ be the $t$-th subset of the sample from this partition. We assume that $S_1$ consists  of elements with complete response and that $S_1$ is nonempty.
 Without loss of generality, we may define $\delta_{i,t}=1$ if $i \in S_t$ and $\delta_{i, t}=0$ otherwise. 
 We wish to construct an estimating function using all available information: 
 \begin{eqnarray*} 
\bar{\bU} ( \btheta ) 
&=& \sum_{t=1}^T \sum_{i \in S_t}  \mathbb{E}\{ \bU(\btheta; \by_i) \mid \by_{i, obs(t) } \} 
\end{eqnarray*} 
where $\by_{i, obs(t)}$ is the observed part of $\by_i$ for $i \in S_t$.  
 Instead of using a model for each conditional distribution, we can use the density ratio model such that 
\begin{equation} 
N_1^{-1} \sum_{i \in S_1} r_t^{\star}( \by_{i, obs(t)}) \bU( \btheta;  \by_i) = N_t^{-1} \sum_{i \in S_t} \mathbb{E}\{ \bU( \btheta; \by_i) \mid \by_{i, obs(t)} \} 
\label{con5}
\end{equation}
for $t=2,\ldots, T$. \textcolor{black}{Condition (\ref{con5}) is  the self-efficiency condition in (\ref{self}). } 
    
 To construct the density ratio function satisfying (\ref{con5}), we first find $\mathcal{H}_t = \mbox{span}\{ b_1^{(t)}(\by_{obs(t)}),$ $ \cdots , b_{L(t)}^{(t)} (\by_{obs(t)})  \}$ such that $\mathbb{E}\{ \bU(\btheta; \by_i) \mid \by_{i, obs(t) } \}  \in \mathcal{H}_t$.  
Thus, using the I-projection idea in Section 3,  we may assume 
\begin{equation}
\log \{ r_t^{\star} (\by_{obs(t)}; {\bphi^{(t)}} ) \} = { \phi_0^{(t)} }+ \sum_{j=1}^{L(t)} {  \phi_j^{(t)} } b_j^{(t)} (\by_{obs(t)} ) 
\label{model5} 
\end{equation}
as the log-linear model for density ratio function.  
 The model parameters in (\ref{model5}) can be estimated by calibration equation derived from (\ref{con5}) and model assumption (\ref{model5}): 
        \begin{eqnarray*}
   N_1^{-1} \sum_{i \in S_1} r_t^{\star}( \by_{i, obs(t)}; {\bphi^{(t)}})  ( 1, \bb_i^{(t) }  ) &=& N_t^{-1} \sum_{i \in S_t}  ( 1, \bb_i^{(t) }  )  
\end{eqnarray*}
with respect to ${\bphi^{(t)}}$ for $t=2,\cdots, T$, where $\bb_i^{(t)}$ is a vector of $b_j^{(t)} (\by_{i, obs(t)})$ for $j=1, \cdots, L(t)$.

Now, the smoothed PS estimator of $\btheta$  can be obtained by solving 
\begin{equation} 
\wh{\bU}_{\rm SPS} ( \btheta) \equiv N^{-1}  \sum_{i \in S_1} \wh{\omega}_i^{\star} \bU ( \btheta;  \by_i) = \bzero , 
\label{ps3}
\end{equation} 
where  
\begin{equation*}
 \wh{\omega}_i^{\star} = \sum_{t=1}^{T} \frac{N_t}{N_1} r^{\star}( \by_{i, obs(t)}; \wh{\bphi}^{(t)} )
 \end{equation*} 
 is the final weights for PS estimation.

To investigate the asymptotic behavior of the solution to (\ref{ps3}), 
the density ratio for missing pattern $t$ can be simplified as $r_{t}^{\star}(\bz_{i,t}; \blambda_{t})$, for $t = 2, \ldots, T$.
Let the true parameter of interest be $\btheta_{0}$ and the true parameter for density ratio be $\blambda_{0} = (\blambda_{2}\trans, \ldots, \blambda_{T}\trans)\trans$. Then we have the following theorem.

\begin{theorem}\label{TMM}
Under the regularity conditions stated in the Supplemtentary Material, for multivariate missing case, the solution $\wh{\btheta}_{\rm SPS}$ to (\ref{ps3}) satisfies 
\begin{align}
 \sqrt{N} \left(\wh{\btheta}_{\rm SPS} - \btheta_{0}\right) = \frac{1}{\sqrt{N}} \sum_{i=1}^N d( \bx_{i}, \by_{i}, \delta_i ; \blambda_{0}) + o_p (1 ),\notag
 \end{align}
 where 
 \begin{align}
    &d( \bx_{i}, \by_{i}, \delta_i ; \blambda_{0})  \\
    =& - \left[\mathbb{E}\left\{      \frac{\partial}{\partial \btheta }   \bU(\btheta_{0}; \bx, \by)    \right\}\right]^{-1} \times   \left[    \delta_{i,1}\bU_i 
 + \sum_{t=2}^{T}{\delta}_{i,t} \wt{\bbeta}_{t}\bz_{i,t} + {\delta}_{i,1} \sum_{t=2}^{T} \frac{N_{t}}{N_{1}} r_{t}^{\star}(\bz_{i,t}; \blambda_{t}) \left\{ \bU_i- \wt{\bbeta}_{t} \bz_{i,t}   \right\} 
     \right] ,\notag
\end{align} 
$\bU_i = \bU( \btheta; \bx_i,  \by_i)$, $\bz_{i,t}=(1, \bb_i^{(t)})$ and 
$\wt{\bbeta}_{t}$ is the probability limit to the solution of 
\begin{align}
\sum_{i=1}^{N} {\delta}_{i,1}r_{t}^{\star}(\bz_{i,t};\blambda_{t})  \left\{\bU(\btheta_{0}; \bx_{i}, \by_{i}) - \bbeta_{t} \bz_{i,t}\right\} \bz_{i,t}\trans = \bzero.\notag
\end{align}
\end{theorem}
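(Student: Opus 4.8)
The plan is to follow the strategy behind Theorem \ref{T1}, now organized as a stacked system of estimating equations in the joint parameter $(\btheta, \bphi^{(2)}, \ldots, \bphi^{(T)})$, and then to read off the $\btheta$-block of the resulting asymptotic linear representation; here $\blambda_t$ denotes the true value of $\bphi^{(t)}$. First I would write
\[
\wh{\bU}_{\rm SPS}(\btheta; \bphi) = \frac{1}{N}\sum_{i\in S_1}\sum_{t=1}^{T}\frac{N_t}{N_1}\, r_t^{\star}(\bz_{i,t}; \bphi^{(t)})\,\bU_i
\]
as an explicit function of both $\btheta$ and the density-ratio nuisance parameters, where $r_1^{\star}\equiv 1$ and each $\wh{\bphi}^{(t)}$ solves its own calibration equation
\[
\wh{\bU}_2^{(t)}(\bphi^{(t)}) = \frac{1}{N_1}\sum_{i\in S_1} r_t^{\star}(\bz_{i,t}; \bphi^{(t)})\,\bz_{i,t} - \frac{1}{N_t}\sum_{i\in S_t}\bz_{i,t} = \bzero, \quad t = 2,\ldots,T.
\]
Because the two-step procedure decouples these calibration equations from $\btheta$, the Jacobian of the stacked system is block triangular, so I can linearize each $\wh{\bphi}^{(t)}-\blambda_t$ on its own and then propagate it into the $\btheta$-equation.

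Second, I would establish the asymptotic linearity of each $\wh{\bphi}^{(t)}$. Under the regularity conditions and the unbiasedness $\mathbb{E}\{\wh{\bU}_2^{(t)}(\blambda_t)\} = \bzero$ at the truth, which is the multivariate analogue of the condition giving $\mathbb{E}\{\wh{\bU}_2(\blambda^{\star})\}=\bzero$ in Section 4, a Taylor expansion of the calibration equation yields $\sqrt{N}(\wh{\bphi}^{(t)}-\blambda_t) = -\bH_t^{-1}\sqrt{N}\,\wh{\bU}_2^{(t)}(\blambda_t) + o_p(1)$, where $\bH_t$ is the probability limit of $N_1^{-1}\sum_{i\in S_1} r_t^{\star}(\bz_{i,t};\blambda_t)\bz_{i,t}\bz_{i,t}\trans$, obtained by differentiating $r_t^{\star}=\exp(\bphi^{(t)\trans}\bz_t)$.

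Third, and this is the crux, I would substitute these expansions into the Taylor expansion of $\wh{\bU}_{\rm SPS}(\wh{\btheta};\wh{\bphi})=\bzero$ about $(\btheta_0,\blambda_0)$. The term carrying the nuisance uncertainty is $\sum_{t\ge 2}\{\partial_{\bphi^{(t)}}\wh{\bU}_{\rm SPS}\}(\wh{\bphi}^{(t)}-\blambda_t)$, and since $\partial_{\bphi^{(t)}}\wh{\bU}_{\rm SPS} = N^{-1}\sum_{i\in S_1}(N_t/N_1)\, r_t^{\star}\,\bU_i\bz_{i,t}\trans$, the product $\{\partial_{\bphi^{(t)}}\wh{\bU}_{\rm SPS}\}\bH_t^{-1}$ converges to the weighted projection coefficient $\wt{\bbeta}_t$ of the statement, namely the probability limit of the solution to $\sum_i \delta_{i,1} r_t^{\star}(\bz_{i,t};\blambda_t)\{\bU_i-\bbeta_t\bz_{i,t}\}\bz_{i,t}\trans=\bzero$. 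Feeding back $\wh{\bU}_2^{(t)}(\blambda_t)$ then splits into the calibration correction $N^{-1}\sum_i \delta_{i,t}\wt{\bbeta}_t\bz_{i,t}$ and the term $-N^{-1}\sum_i \delta_{i,1}(N_t/N_1)\, r_t^{\star}\wt{\bbeta}_t\bz_{i,t}$; adding the direct term $N^{-1}\sum_i\delta_{i,1}\{1+\sum_{t\ge 2}(N_t/N_1) r_t^{\star}\}\bU_i$ and collecting reproduces the bracketed expression in $d(\bx_i,\by_i,\delta_i;\blambda_0)$. Premultiplying by $-\btau^{-1}$, with $\btau=\mathbb{E}\{\partial\bU(\btheta_0)/\partial\btheta\trans\}$ from the outer Taylor expansion, gives the claimed representation.

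Finally I would verify that the summands $d(\bx_i,\by_i,\delta_i;\blambda_0)$ are independent and identically distributed with mean zero and finite second moment, and apply the Lindeberg--L\'evy central limit theorem to the leading average, bounding the remainder by $o_p(1)$ through the consistency of $\wh{\btheta}$ and $\wh{\bphi}$ together with a stochastic equicontinuity argument. I expect the third step to be the main obstacle: the $T-1$ density-ratio nuisance estimators are coupled through their common reliance on the complete-case set $S_1$, so I must track the joint expansion and confirm that the cross-pattern contributions reorganize exactly into the separate projection coefficients $\wt{\bbeta}_t$ without leaving residual cross terms. Verifying that the block-triangular structure suppresses these cross terms, equivalently that each pattern's calibration is first-order orthogonal to the others, is the delicate part of the argument.
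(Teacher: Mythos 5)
Your proposal is correct and follows essentially the same route as the paper: a Taylor linearization of the two-step Z-estimator, expanding $\wh{\bU}_{\rm SPS}$ in $\btheta$ and in each calibration parameter $\bphi^{(t)}$ separately, identifying $\{\partial_{\bphi^{(t)}}\wh{\bU}_{\rm SPS}\}\bH_t^{-1}$ with the weighted projection coefficient $\wt{\bbeta}_t$, and recombining the pieces into the stated influence function before applying the CLT. Your worry about cross-pattern coupling resolves exactly as you suspect: the corrections enter additively and linearly in each $\wh{\bphi}^{(t)}-\blambda_t$ at first order, so no residual cross terms survive in the linearization.
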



The above theorem depicts the asymptotic behavior of our proposed estimators in multivariate missing case. Its proof is presented in the Supplementary Materials.

\section{Simulation Study}

\subsection{Simulation Study One}\label{sssimulation1}

Two limited simulation studies are  performed to check the performance of the proposed method. 
In the first simulation study, we compare the proposed method with other existing methods under the MAR setup with few covariates. 
 The setup for the first simulation study employed  a $2 \times  2$ factorial structure with two factors. The first factor is the outcome regression (OR) model that generates the sample. The second factor is the response mechanism (RM). We generate $\delta$ and $\bx = (x_{1}, x_{2}, x_{3}, x_{4})\trans$ based on the RM first. We have two different setup for the response mechanism as follows: 
 \begin{itemize} 
\item  RM1 (Logsitic model):\begin{align}
    x_{ik} &\sim N(2,1), \mbox{for\ }k = 1, \ldots, 4,\notag\\
   \delta_{i} &\sim \mbox{Ber}(p_{i}),\notag\\
   \mbox{logit}(p_{i}) &=  1 - x_{i1} + 0.5x_{i2} +  0.5x_{i3} - 0.25x_{i4}.\notag
\end{align}
 \item RM2(Gaussian mixture model):\begin{align}
   \delta_{i} &\sim \mbox{Bern}(0.6)\notag\\
    x_{ik} &\sim N(2,1), \mbox{for\ }k = 1, 2, 3,\notag\\
  x_{i4} &\sim\begin{cases}
  N(3,1), \mbox{if\ }\delta_{i} = 1\notag\\
  N(1,1), \mbox{otherwise}.
  \end{cases}
\end{align}
  \end{itemize}

 Once $x$ and $\delta$ are generated, we  generate $y$ from two different outcome models, OR1 and OR2, respectively. That is, we generate $y$ from 
 \begin{itemize} 
\item  OR1: $y_i = 1 +x_{i1} + x_{i2} + x_{i3} + x_{i4} + e_i. $
\item OR2: $y_i = 1 +0.5 x_{i1}  x_{i2} + 0.5 x_{i3}^{2} x_{i4}^{2} + e_i. $ 
  \end{itemize} 
 Here, $e_i \sim N(0, 1)$.

 From each of the sample, we compare four different PS  estimators of $\theta$:
\begin{enumerate}
\item The proposed information projection (IP) PS estimator using the maximum  entropy method in Section 3 using $\bx_i = (1, x_{i1}, x_{i2}, x_{3i}, x_{i4})\trans$ as the control variable for calibration. Thus, the proposed PS estimator can be written as $\wh{\theta}_{{\rm SPS}}= N^{-1} \sum_{i=1}^N \delta_i \wh{\omega}_i^\star y_i$ where $\wh{\omega}_i^\star$ is of the form in  (\ref{swgt}) with 
        \begin{equation} 
        \sum_{i=1}^N \delta_i \wh{\omega}_i^\star (1, x_{i1}, x_{i2}, x_{3i}, x_{i4}) =\sum_{i=1}^N (1, x_{i1}, x_{i2}, x_{3i}, x_{i4}) .  
        \label{cal5}
        \end{equation} 
    \item The classical PS estimator using 
    maximum likelihood estimation (MLE) of the response probability with Bernoulli distribution with parameter $\mbox{logit}(p_{i}) = \bx_{i}\trans \blambda $.
    \item Covariate balancing propensity score method (CBPS) of  \citet{imai2014covariate}. The CBPS estimator of $\theta$ is obtained by applying empirical likelihood method that maximizes  $ l( \omega) = \sum_{i=1}^n \delta_i   \log (\omega_i) $ 
        subject to (\ref{cal5}). 
    \item Entropy balancing propensity score method (EBPS) of  \citet{hainmueller2012entropy}  using calibration variable $(1, x_{1}, x_{2}, x_{3}, x_{4})\trans$. That is, find the weights $\omega_i$ that maximize 
        $ Q( w) = \sum_{i=1}^n \delta_i \omega_i \log (\omega_i) $
        subject to (\ref{cal5}). 
\end{enumerate}



To check the performance of the four estimators, 
 we use sample size $N=5,000$ with $5,000$ Monte Carlo samples. The results are presented in Figure \ref{fig:MAR}, where IP is our proposed method.

\begin{figure}
         \centering
         \includegraphics[width=1\textwidth]{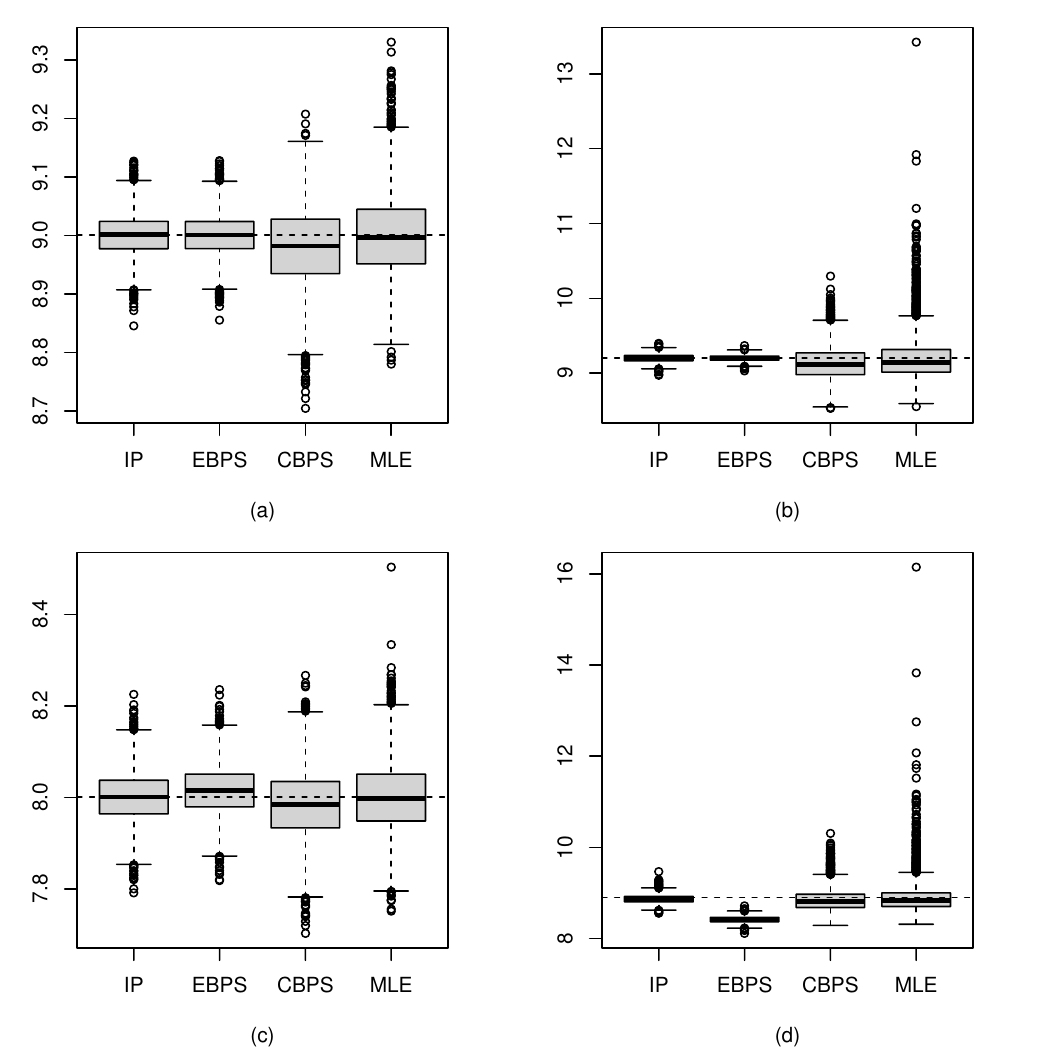}
         \caption{Boxplots with four estimators for four models under simulation study one: (a) for OR1RM1, (b) OR1RM2, (c) for  OR2RM1 and (d) for OR2RM2.}
         \label{fig:MAR}
\end{figure}

 When we use $(1, x_1, x_2, x_3, x_4)$ as the calibration variable, OR1 matches with the working outcome model and RM1 matches with the working response model. Among the four methods considered, our proposed method and EBPS method perform better than the other two methods. 
Entropy balancing propensity score method also shows good performances  when OR1 or RM1 is true as it is doubly robust, but when both models fail (i.e. OR2RM2 setup),  the performance is really poor.  Our proposed method is also doubly robust and it performs reasonably well even when both models fail. Note that our proposed PS weights can be written as 
$$ \hat{\omega}_i^\star=  1+ {N_0} \times  \frac{ \exp ( \bx_i\trans \wh{\blambda})}{ \sum_{k=1}^N \delta_k \exp ( \bx_k\trans \wh{\blambda}) }  . 
$$
Thus, the PS estimator of $\theta=\mathbb{E}(Y)$ can be expressed as 
$$ \wh{\theta}_{\rm SPS} = \frac{1}{N}  \sum_{i=1}^N \left\{ \delta_i y_i + (1- \delta_i) \wh{y}_i  \right\} $$
where 
\begin{equation} 
 \wh{y}_i =  \frac{ \sum_{j=1}^N \delta_j \exp ( \bx_j\trans \wh{\blambda}) y_j }{ \sum_{k=1}^N \delta_k \exp ( \bx_k\trans \wh{\blambda}) }.\notag
\end{equation} 
Thus, the proposed PS estimator can be expressed as a special case of fractional hot deck imputation of  \cite{kimyang14}, which is a robust estimation method.  
Other PS estimators does not allow for this interpretation.

\subsection{Simulation Study Two}

We performed another simulation study to understand the effect of the calibration variables 
for smoothed PS weighting. We first generate  $\bx = (x_1, x_2, x_3)$ from multivariate normal distribution with mean $(1, 1, 1)$ and variance $\Sigma$ where
$$
\Sigma = \begin{pmatrix}
1 & 0 & 0 \\
0 & 1 & 0.5 \\
0 & 0.5 & 1
\end{pmatrix}
$$
and generate $y$ from $y = 1+ 0.5 x_1 - x_2 + e$ where $e \sim N(0, \sigma^2 = 1)$. We also generate $\delta_i \sim \text{Bernoulli}(\pi_i)$ where $\text{logit}(\pi_i) = - x_{1i} + \phi (x_{2i}-1) + x_{3i}$, with $\phi=0$ for Scenario 1 and $\phi=1$ for Scenario 2, respectively.  The parameter of interest is $\theta = \mathbb{E}(Y)$. The population size is 
 $N = 1,000$. We considered three choices of the calibration variables in the proposed PS method. 

    \begin{enumerate}
        \item The proposed PS estimator using $(1, x_1, x_2)$ as the calibration variable
        \item The proposed PS estimator using $(1, x_1, x_3)$ as the calibration variable
        \item The proposed PS estimator using $(1, x_1, x_2, x_3)$ as the calibration variable
    \end{enumerate}

Monte Carlo samples of size $B=1,000$ are used to compute the Monte Carlo bias, Monte Carlo standard errors and the root mean squared errors of the estimators considered.   
The simulation result in Table 1 shows that the PS estimator using $(1, x_1, x_2)$ is the most efficient among the three calibration estimators considered. That is, using calibration variables in the outcome model only achieves the best efficiency.  In Scenario 2, the calibration estimator using the covariates for the outcome model is more efficient than the calibration estimator using the covariates for the selection model. 
The simulation result is consistent with our theory in Remark 3.

\begin{table}
\caption{Simulation summary of three calibration estimators for Simulation Study Two} 
\begin{center} 
\fbox{\begin{tabular}{r|rrr|rrr}
   & \multicolumn{3}{|c|}{Scenario 1} & \multicolumn{3}{|c}{Scenario 2} \\
   \cline{2-7}  
 Method & BIAS & SE & RMSE & BIAS & SE & RMSE\\ 
  \hline
  a & 0.00 & 0.063 & 0.063 &  0.00 & 0.072 & 0.072 \\ 
  b & 0.00 & 0.083 & 0.083 & -0.34 & 0.086 & 0.351\\ 
  c & 0.00 & 0.070 & 0.070 &  0.00 & 0.085 & 0.085 \\ 
\end{tabular}
}
\end{center} 
\end{table}

\section{Conclusion}
In handling the selection bias of the voluntary sample, the problem of estimating the inverse propensity score is recast as a problem of estimating the density ratio function. The density ratio function is defined for the balancing score function, and the propensity scores that arise can be efficient if the balancing score includes the actual mean function.
The information projection technique builds a self-efficient propensity score estimating function. The 
variable selection technique for the outcome model can be utilized to generate efficient propensity score weights.
The proposed method can be utilized as a unifying tool for merging information from multiple data sources.

There are several directions for further extensions of the proposed method. The proposed method can be extended to handle data integration problems \citep{chen2020doubly} 
 in survey sampling. 
 Also, the proposed method is based on the assumption of missing at random. Extension to nonignorable nonresponse \citep{kim2011semiparametric} can be also an interesting research direction. 
 Instead of using Kullback-Leibler divergence, we may use Hellinger divergence  \citep{etho2021, li2019b} or $\gamma$-power divergence \citep{eguchi2021} to achieve some robustness. 
 In addition, the proposed method can be used for causal inference, including the estimation of the average treatment effect from observational studies (\citealp{yangding2020}).  Developing tools for causal inference using the proposed  method will be an important extension of this research. 
 
\section*{Acknowledgement}

The authors would like to thank professors Wayne A. Fuller and Zhiqiang Tan for their constructive comments. 

\bibliographystyle{chicago}
\bibliography{ref_refine}

\end{document}